\def\BibTeX{{\rm B\kern-.05em{\sc i\kern-.025em b}\kern-.08emT\kern-.1667em\lower.7ex\hbox{E}\kern-.125emX}}
\newtheorem{prop}{Proposition}
\newtheorem{theorem}{Theorem}
\newtheorem{definition}{Definition}
\newtheorem{lemma}{Lemma}
\newcommand{\sq}{\hbox{\rlap{$\sqcap$}$\sqcup$}}
\newcommand{\qed}{\hspace*{\fill}\sq}
\newenvironment{proof}{\noindent {\bf Proof.}\ }{\par\vskip 4mm\par}
\begin{document}

\title{\LARGE \bf Partitionable Asynchronous Cryptocurrency Blockchain}

\author{
Kendric Hood, Joseph Oglio, Mikhail Nesterenko, and Gokarna Sharma\\
\textit{Department of Computer Science, Kent State University,
Kent, OH 44242, USA} \\
khood5@kent.edu, joglio@kent.edu, mikhail@cs.kent.edu, sharma@cs.kent.edu}

\maketitle 
\thispagestyle{plain}
\pagestyle{plain}

\begin{abstract}
We consider operation of blockchain-based cryptocurrency in case of partitioning. We define the Partitionable Blockchain Consensus Problem. The problem may have an interesting solution if the partitions proceed independently by splitting accounts. 
We prove that this problem is not solvable in the asynchronous system. The peers in the two partitions may not agree on the last jointly mined block or, alternatively, on the starting point of independent concurrent computation.
We introduce a family of detectors that enable a solution. We establish the relationship between detectors. We present the algorithm that solves the Partitionable Blockchain Consensus Problem using our detectors. We extend our solution to multiple splits, message loss and to partition merging. We simulate and evaluate the performance of detectors, discuss the implementation of the detectors and future work.
\end{abstract}

\begin{IEEEkeywords}
Blockchain, Consensus, Network partitions, Detectors, Validity, Confirmation, Branch, High frequency trading.
\end{IEEEkeywords}

\sloppy

\section{Introduction} Peer-to-peer networks are an attractive way to organize distributed computing.
Blockchain is a technology for building a shared, immutable, distributed ledger. Blockchain is typically maintained by a peer-to-peer network.
A prominent blockchain application is cryptocurrency such as Bitcoin~\cite{nakamoto} or Ethereum~\cite{ethereum}. In a blockchain cryptocurrency, the ledger records financial transactions.
The transactions are appended to the ledger block by block. The application is decentralized and peers have to agree on each block. Such consensus is the foundation of blockchain algorithms. 

Classic robust distributed consensus algorithms~\cite{pbft,Byzantine} use cooperative message exchange between peers to arrive at a joint decision. However, such algorithms require participants to know the identity of all peers in the network. This is not often feasible for modern high-turnover peer-to-peer networks.

Bitcoin uses Nakamoto consensus~\cite{nakamoto} where the participants compete to add blocks to the ledger. This algorithm does not require complete network knowledge. Due to the simplicity and robustness of the algorithm, Nakamoto consensus became widely-used in cryptocurrency design.

For their proper operation, blockchain consensus algorithms assume that the network remains connected at all times. The network may not confirm transactions while parts of the network are unable to communicate. Alternatively, a single primary partition makes progress while the others are not utilized. This is not accidental as the problem of concurrently using partitions is difficult to handle: the partitioned peers may approve transactions that conflict and thus violate the integrity of the combined blockchain. Ultimately, the problem is stated in the classic CAP Theorem~\cite{gilbert2002brewer} claiming that it is impossible co-satisfy consistency, availability and partition-tolerance in a distributed system. 

It is assumed that partitioning interruptions are infrequent or insignificant for network operation. However, this may not be the case. While long network-wide splits may be rare, brief separations are common. This can happen, for example, if groups of peers are connected via a small set of channels. If these channels are congested, the groups are effectively cut off from each other and the network is temporarily partitioned. 

As blockchain becomes a larger component of the financial market, the pressure for the system's availability will increase. Many financial applications, such as high frequency trading \cite{kirilenko2017flash}, are sensitive to even a slight delay. A system delay that lasts a few milliseconds may cost its users substantial time and money. Thus, considering the blockchain-based cryptocurrency that is available through partitioning is required for this technology to realize its potential. 

\ \\
\textbf{Our contribution.} In this paper, we formally state the Partitionable Blockchain Consensus Problem. What enables the solution is the possibility of splitting the accounts and processing transactions in the partitions independently without violating the integrity of the complete blockchain.

We use the asynchronous system model for the study of partitionable consensus. The model does not place assumptions on the 
peers' 
relative computation power or message propagation delay and thus have near-universal applicability. 
Consensus is impossible in the asynchronous system even if a single peer crashes~\cite{FLP}. Intuitively, peers are not able to distinguish a crashed process from a very slow one. This impossibility is circumvented with crash failure detectors~\cite{chandra1996weakest,chandra1996unreliable} that provide minimum synchrony to allow a solution. 
We pattern our investigation on this classic approach. We show the impossibility of a partitionable consensus solution in the asynchronous system and then introduce partitioning detectors to enable it.  We present an algorithm that solves the Partitionable Blockchain Consensus Problem using the detectors. 
To simplify the presentation, we first consider a single split with no message loss and no subsequent merging. We then extend our solution to accommodate message loss, multiple splits and temporary splits and merging. We implement our algorithm and evaluate its performance.

\ \\
\textbf{Related work.}
There is 
a number of recent publications dealing with the implementation or modification of classic consensus~\cite{sbft,hyperledger,algorand,honeybadger}.
There are plenty of recent studies presenting blockchain design based on Nakamoto consensus~\cite{bentov2016snow,bitcoinng,ouroboros,nakamoto,pass2017fruitchains}. 
There are papers that combine classic and Nakamoto consensus~\cite{abraham2016solida,hybrid}. Recent research on Nakamoto-based blockchain often focuses on improving its speed and scalability~\cite{decker2016bitcoin,elastico,pass2018thunderella,rapidchain}. One promising blockchain acceleration technique is to concurrently build a DAG of blocks~\cite{popov2016tangle,ghost}. The state-of-the-art on the blockchain consensus algorithms can be found in this recent survey \cite{xiao2019}. 

Relatively few publications focus on partitionable blockchains. There are some studies where the partitionable classic consensus either addressed directly~\cite{friedman1997fast} or 
using failure detectors ~\cite{Aguilera1999,dolevPODC97}.
Partitionable consensus has similarities with the group membership problem, which deals with presenting a consistent membership set to the processes despite process and link failures \cite{babaouglu1997group,chandra1996impossibility,chockler2001group}

Extended virtual synchrony (EVS) \cite{EVS} is a technique that supports continued operation in all partitions. That is, during network partitioning and re-connection, it maintains a consistent relationship between the delivery of messages and configuration changes. However, static membership is assumed and hence this is not easily extendable to work in blockchain systems with dynamic membership. 

Tran {\it et al.} \cite{Tran2019SwarmDAGAP} consider an algorithm that implements partitionable blockchain consensus in the context of swarm robotics. In swarm robotics, the robot swarms may experience network partitions due to navigational and communication challenges or in order to perform certain tasks efficiently. Their solution extends EVS and hence is not suitable for partitionable blockchain under dynamic membership as we consider in this paper. Recently, Guo {\it et al.} \cite{GuoPS19} 
observed that synchronous classic consensus protocols cannot even tolerate a short-term jitter that takes a node offline or makes it leave the system for a very short time. 
Then, they provided a solution that makes those protocols tolerant to such jitter while keeping the same consistency and liveness properties.
Karlsson {\it et al.} propose a partitionable blockchain for Internet-of-things devices~\cite{karlsson2018vegvisir}.

\section{Notation}
\label{section:notation}
\noindent
\textbf{Communication model.} A \emph{peer} is a single process. All peers operate correctly and do not have faults. A \emph{partition} is a collection of peers that can communicate. This communication is done through message passing. A broadcast sends a message to every peer in its partition. Communication channels have infinite capacity and are FIFO. The channels are reliable unless the network is split. A \emph{(network) split} separates one partition into two. 
Peers are split into two arbitrary non-empty sets. A message sent before the network split is always delivered; a message sent after the split is delivered only to the recipients that are in the same partition as the sender. 

Each peer contains a set of variables and commands. A \emph{network state} is an assignment of a value to each variable from its domain. An \emph{action} is an execution of a command.  An action transitions the network from one state to another.
A \emph{computation} is a sequence of states resulting from actions.  Actions in a computation are atomic and do not overlap. We consider the network split to be a particular kind of action. A computation is either infinite or ends in a state where no action may be executed. A \emph{computation segment} is a portion of a computation that starts and ends in a state. A computation segment from the initial state to a particular state is a computation \emph{prefix}. A, possibly infinite, computation segment following a particular state is a \emph{suffix}.

We assume fair action execution and fair message receipt. Specifically, in any computation, any action is eventually either executed or disabled; any sent message is eventually received. We assume there is at most one split per computation. That is, the network starts as one partition and it may split into two. Observe that this means that fairness does not apply to a network split action: the split may not happen. Since we are considering the purely asynchronous system model, there may be no re-connections. If a split occurs, the two partitions exist for the rest of the computation. We relax the single split and no re-connection assumption further in the paper. 

\ \\
\noindent\textbf{Causality.} An action $a_1$ \emph{causally precedes} action $a_2$ if either (i) $a_1$ and
$a_2$ are different actions of the same process and $a_1$ occurs before
$a_2$, (ii) $a_1$ contains a send and $a_2$ contains a receipt of
the same message (iii) $a_1$ is a receipt of a message sent outside of the partition and $a_2$ is the network split creating this partition. 
While the first two of the above cases are fairly conventional~\cite{lamport1976ordering}, the last case may require clarification. Indeed, once the split occurs, no messages sent outside the partition are received. Hence, all such receipts causally precede the split.

The casual precedence relation is transitive.  Two actions that are not causally related are \emph{concurrent}. Since a split does not affect communication inside a partition, intra-partition communication is concurrent with the split. Also, since there is no communication between partitions, any two actions in separate partitions are concurrent. 

If the following sequence $\langle\cdots,a_i,a_{i+1}\cdots\rangle$ is
a computation of some algorithm, and the actions $a_i$ and $a_{i+1}$ are concurrent, then $\langle\cdots,a_{i+1},a_{i},\cdots\rangle$ is also a computation of this algorithm.  That is, swapping consequent concurrent actions in a computation of an algorithm, produces another computation of the algorithm. 

\ \\
\noindent
\textbf{Accounts and transactions.} An \emph{account} is a means of storing funds. Each account has a unique identifier. A \emph{transaction} is a transfer of funds from the source to the target account. For simplicity, we assume that there is a single source and a single target account. Each transaction has a unique identifier as well.

A \emph{client} is an entity that submits transactions to the network via broadcast. There may be multiple clients. The transaction identifiers for each client are monotonically increasing. A client submits each transaction to a single partition. If transactions are submitted to two partitions, they are considered separate transactions. 

\ \\
\noindent
\textbf{Blockchain.} Peers mine transactions. Such a mined transaction is a \emph{block}. That is, to simplify the presentation, we assume a single transaction per block. A mined block cannot be altered. Besides a transaction, each block contains an identifier of another block. Thus, a block is linked to another block. A \emph{blockchain} is a collection of such linked blocks. A \emph{genesis} is the first block in the blockchain. The genesis is unique. There are no cycles in the blockchain. That is, the blockchain is a tree. A \emph{branch} of a tree is a chain of blocks from the genesis to one of the leaves of the tree. See, for example, a branch from the genesis to block $1$ in Figure~\ref{figPARTImage}.

The \emph{main chain} is the longest branch in a blockchain. Ties are broken deterministically. A \emph{permanent branch} is infinite. We assume that there is at most one permanent branch per partition.

Each peer operates as follows. If it receives a transaction, it stores it. The peer attempts to mine one of the pending transactions by linking it to the tail of its main chain. If it succeeds, the block is immediately broadcast. The peer continues while there are pending transactions. A peer may quit trying to mine a transaction and switch to mining another one. For example, if a new block arrives, a peer may switch to mining on top of it. We make the following assumption:  if a peer receives infinitely many new transactions, then the peer either receives infinitely many mined blocks or mines infinitely many blocks itself.

\emph{Global blockchain (tree)} is the collection of all blocks mined by all peers. A \emph{fork} is the case of multiple blocks linking to the same block. A fork happens when several peers succeed in concurrently mining blocks. See Figure~\ref{figPARTImage} for an example. Since the genesis block is unique, it may never be in a fork. If there is a network split, a \emph{seed} is the last block mined on any branch before split. That is, at the time of a split, the last block on every branch is a seed. 

Consider a block $b$ on the blockchain tree and a branch that leads from the genesis to $b$.
A \emph{balance} for any particular account $a$ with respect to $b$ is the sum of all funds that are transferred into $a$ by the transactions of the blocks in this branch minus the funds that are transferred out of $a$.

A transaction is \emph{valid} if its application leaves the source account with a non-negative balance. The transaction is \emph{invalid} otherwise. The transactions may differ depending on the particular branch of the tree.  Therefore, a transaction may be valid in one branch and invalid in another. If a peer mines a transaction, it is valid relative to its main chain. That is, peers mine only valid transactions. 

A transaction is \emph{confirmed} if it is in the permanent branch. It is \emph{rejected} if it is not in the permanent branch. A transaction is \emph{resolved} if it is either confirmed or rejected.  A transaction is \emph{permanently valid} if it is valid indefinitely or until it is resolved. We assume that in each partition, at least one client submits infinitely many permanently valid transactions.
 
\noindent
 \textbf{Account splits.}
In case of a network split, the account balances may also be split. That is, the peers consider the amount of funds available on a particular account to be a fraction of the original amount. Accounts are split in the seed: the last pre-split block in the blockchain. The blockchain may have multiple branches and, therefore, multiple seeds. Thus, accounts may potentially be split in different seeds.
See Figure~\ref{figPARTImage} for example.

To eliminate a trivial case, we assume that at least some funds are distributed between partitions. That is, we exclude the case where a partition is left with zero funds for all accounts. Otherwise, we place no restrictions on the way that accounts are split between partitions so long as the total on each account balance in both partitions post-split is equal to the pre-split account balance in the seed block. For example, suppose there is an account $a$ that has a balance of $100$. Then a network split occurs. Account $a$ may be split into $a_1$ and $a_2$. Accounts $a_1$ and $a_2$ cannot be in the same partition. Account $a$ is split $70/30$, thus the balance of $a_1$ is $70$ and $a_2$ is $30$. If the accounts are split unevenly, each peer must know to which partition it belongs. 
If $a$ is halved, i.e. split $50/50$, then the peers do not need to identify which partition they are in.
%
%
Observe that a split affects the validity of a transaction. If the submitted transaction is not valid after split, it is not mined. 

A \emph{branch merge} is an arbitrary interleaving of transactions of two branches such that the order of transactions of each branch is preserved. Two branches are \emph{mergeable} if all transactions mined before the split are resolved uniformly and any branch merge retains the validity of all transactions of the two branches.

\begin{prop} \label{propMerge}
Branches from different partitions are mergeable if they are split on the same seed. 
\end{prop}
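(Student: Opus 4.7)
The plan is to unpack the definition of \emph{mergeable} into its two obligations and discharge each one separately, exploiting that a shared seed forces a shared pre-split history and that post-split accounts in different partitions are disjoint.

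First I would establish uniform resolution of every pre-split transaction. Since the blockchain is a tree rooted at the unique genesis, a block uniquely determines the branch leading to it from the genesis. If both branches share the same seed, they must share the entire prefix from genesis up to that seed, and by the definition of the seed this prefix contains every pre-split block. Hence every pre-split transaction lies in the common prefix of both branches, so it is either confirmed in both or rejected in both, depending only on whether that common prefix is extended to a permanent branch in each partition. In particular the resolution is identical on the two branches.

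Next I would handle the validity-under-interleaving condition. Let $\sigma$ be an arbitrary branch merge of the two branches $B_1$ and $B_2$, and consider any post-split transaction $t$ from, say, $B_1$, whose source account is some $a_1$ arising from the split of a pre-split account $a$ into $a_1$ (in partition~1) and $a_2$ (in partition~2). By the account-split assumption, $a_1$ and $a_2$ are distinct identifiers residing in different partitions, so no transaction in $B_2$ names $a_1$ as source or target. Consequently the balance of $a_1$ just before $t$ in $\sigma$ is determined solely by the initial post-seed balance of $a_1$ and the prefix of $B_1$-transactions that precede $t$ in $\sigma$; since $\sigma$ preserves the relative order of $B_1$, this prefix is exactly the prefix of $B_1$ preceding $t$ in $B_1$. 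Thus $a_1$ has the same balance at $t$ in $\sigma$ as it does in $B_1$, where $t$ is valid by hypothesis, so $t$ remains valid in $\sigma$. The symmetric argument handles $B_2$, which together gives mergeability.

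The main obstacle I anticipate is making the disjointness step airtight when accounts are split unevenly or when a transaction's target (not just its source) lies in the seed; I would therefore be explicit in the proof that the account-split rule renames the two halves into different partition-local identifiers, so that the conservation equation $\text{bal}(a_1)+\text{bal}(a_2)=\text{bal}(a)$ applies at the seed and then the two halves evolve independently. The first obligation, by contrast, is essentially a tree-uniqueness observation and should be straightforward.
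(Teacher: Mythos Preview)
The paper states this proposition without proof, so there is nothing to compare against directly; your proposal is supplying what the paper omits.

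Your decomposition into the two obligations of the definition of \emph{mergeable} is the right structure, and both halves are essentially sound. For uniform resolution, the tree property forces a unique path from the genesis to the shared seed, and since the seed is by definition the last pre-split block on a branch, every pre-split block lying on either of the two branches sits in that common prefix; a pre-split block not on the prefix is absent from both branches and hence rejected in both. For validity under interleaving, the paper's model explicitly renames the split halves of each account into partition-local identifiers (``$a_1$ and $a_2$ cannot be in the same partition''), so post-seed transactions from $B_1$ and from $B_2$ touch disjoint account sets and your order-preservation argument goes through.

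One small tightening: you write that the shared prefix ``contains every pre-split block,'' which overstates the claim---it contains every pre-split block \emph{on these two branches}, not every pre-split block in the global tree. The conclusion you need (uniform resolution with respect to the two branches) only requires the weaker statement, so the argument survives, but the wording should be adjusted. You might also note explicitly that any interleaving must begin with the shared prefix in its original order (since both branches agree on it and order within each branch is preserved), so pre-seed validity is automatic; only the post-seed tails genuinely interleave, and that is exactly where your disjoint-accounts argument applies.
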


\ \\
\noindent\textbf{Detectors.}
A \emph{detector} is a mechanism that provides information to the algorithm that it may not able to determine otherwise. 
Specifically, a detector is an algorithm whose actions may include information available outside the model. The \emph{pure asynchronous} system has no detectors.  The output variables of a detector provide information for other algorithms to use. The actions of the detector and the algorithm that uses it, interleave in a fair manner. 

Detector \emph{A} is \emph{weaker} than detector \emph{B}, if there exists an algorithm such that it accepts every computation of 
\emph{A} and produces a computation of \emph{B}. Detector \emph{A} is \emph{equivalent} to \emph{B}, denoted $B \equiv A$, if both \emph{A} is weaker than \emph{B} and \emph{B} is weaker than \emph{A}. Detector \emph{A} is \emph{strictly weaker} than \emph{B}, denoted $B \succ A$, if \emph{A} is weaker than \emph{B} but not equivalent to \emph{B}.

\section{The Partitionable Blockchain Consensus Problem}
\label{section:problem}

\begin{definition} {\em The Partitionable Blockchain Consensus problem is the intersection of the following three properties: \\
{\bf confirmation validity:} no invalid transaction is confirmed;\\
{\bf branch compatibility:} permanent branches are mergeable;\\
{\bf progress:} every permanently valid transaction is eventually confirmed.\\
}
\end{definition}
The first two properties are safety while the progress property is liveness~\cite{liveness}.

%
%

\section{Impossibility}
\label{section:impossibility}

We show that it is not possible to achieve partitionable blockchain consensus in 
the pure asynchronous system. 
The intuition for our argument is as follows. In the pure asynchronous system, peers may not directly know whether the split has occurred. The peers may only infer this from message communication.
Recall that our model guarantees reliable pre-split message delivery while after split this guarantee is only within the sender's partition. Thus, a sender may not be certain whether all the peers in the network received its message. We exploit this uncertainty to demonstrate lack of solution to the Partitionable Blockchain Consensus Problem.

A transaction is \emph{split-invalidated} if it is valid unless a split occurs. For example, if an account has a balance of $10$, and this account is split $50/50$, then a transaction that spends $6$ is split-invalidated. 

To exclude a trivial solution which rejects all split-invalidated transactions, we introduce the following definition. An algorithm is \emph{regular} if there exits a computation of this algorithm where a split-invalidated transaction mined before the split is confirmed. 
An algorithm is \emph{strictly regular} if it confirms all split-invalidated transactions mined before the split.

\begin{lemma}\label{lemSplitInvalidated}
A regular algorithm that solves the Partitionable Blockchain Consensus Problem may not confirm a split-invalidated transaction.
\end{lemma}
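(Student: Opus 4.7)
My plan is a proof by contradiction. Suppose $\mathcal{A}$ is a regular algorithm that solves the Partitionable Blockchain Consensus Problem and, by regularity, has a computation $C$ in which a split-invalidated transaction $t$, mined by some peer $p$ before the split, is confirmed. The goal is to construct a companion computation $C'$ of $\mathcal{A}$ that violates confirmation validity, contradicting the hypothesis.

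Let $a_m$ be the action in $C$ in which $p$ mines $t$ and let $a_s$ be the network split. The first step is to argue that $a_m$ and $a_s$ are concurrent: $a_m$ is $p$'s local mining action and $a_s$ is a network-level event; per the causality definition, only receipts of cross-partition messages causally precede $a_s$, and $a_m$ is no such receipt. Using the principle that consecutive concurrent actions may be transposed, I would rearrange $C$ so that $a_m$ is immediately followed by $a_s$. If cross-partition receipts intervene, I would first push them to occur before $a_m$, since each such receipt is also concurrent with $a_m$. Swapping then yields a computation $C'$ of $\mathcal{A}$ that is identical to $C$ except that $a_s$ now precedes $a_m$.

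The indistinguishability argument is the heart of the proof. Let $P_1$ be $p$'s partition in the split. No peer in $P_1$ can locally observe the reordering of $a_s$ and $a_m$: the split action produces no local event in the pure asynchronous model; every message delivered inside $P_1$ has identical content and relative order in $C$ and $C'$; and $p$'s state at the instant it executes $a_m$ is unchanged. Therefore every peer in $P_1$ traces an identical sequence of local states in $C$ and $C'$ and in particular reaches the same confirmation decisions. Since $t$ is confirmed in $C$, it is confirmed in $C'$ as well. However, in $C'$ the broadcast of $t$'s block happens after the split, so $t$'s block sits post-split on $P_1$'s branch; because $t$ is split-invalidated, the balance of $t$'s source account along $P_1$'s branch immediately before $t$'s block is strictly less than the amount $t$ transfers. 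Thus $t$ is invalid on its own confirming branch, violating confirmation validity.

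The main difficulty I foresee lies in the reordering step: ensuring that $a_s$ can be moved adjacent to $a_m$ without violating any causality constraint imposed by intermediate actions, especially receipts of cross-partition messages that must precede $a_s$. Once that reordering is justified, the indistinguishability argument, and the split-invalidation-based invalidity of $t$ on $P_1$'s post-split branch, follow directly from the definitions in the notation section.
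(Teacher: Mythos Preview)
Your argument contains a genuine gap at its core step: the claim that $a_m$ and $a_s$ are concurrent is false in general. The mining action $a_m$ includes an immediate broadcast of the block. Because $a_m$ occurs before the split in $C$, that broadcast is delivered to some peer $q$ in the other partition $P_2$; call the corresponding receipt $r$. By clause~(ii) of the causality definition, $a_m$ causally precedes $r$; by clause~(iii), $r$ causally precedes $a_s$; and the paper explicitly makes the relation transitive. Hence $a_m \prec a_s$ and the swap rule does not apply. Your proposed fix---pushing intervening cross-partition receipts to occur before $a_m$---cannot cure this, since the receipt of $a_m$'s own broadcast causally \emph{follows} $a_m$ and therefore cannot be moved ahead of it.

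There is a second, related gap in the indistinguishability step. Even if one simply declares a new computation $C'$ with the split placed before $a_m$ (rather than obtaining it by swaps), you need every peer in $P_1$ to see the same sequence of incoming messages as in $C$. But between $a_m$ and $a_s$ in $C$ there may be arbitrarily many cross-partition messages delivered to $P_1$; in $C'$ none of these arrive. So $P_1$'s local views in $C$ and $C'$ need not coincide, and you have no basis to conclude that $t$ is still confirmed in $C'$. The paper's proof deals with exactly this obstruction: it walks the split action leftward one cross-partition transmission at a time, and in the case where the confirmation decision flips at some transmission (Case~1) it builds a hybrid computation in which the two partitions disagree, violating branch compatibility. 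Only when no such flip occurs (Case~2) does the paper reach the situation you jump to directly, and even there it constructs a two-sided computation and derives a branch-compatibility violation rather than relying on confirmation validity alone. Your proposal essentially attempts Case~2 without first discharging Case~1, and the missing case is where the real work lies.
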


Lemma~\ref{lemSplitInvalidated} states that a regular solution to the Partitionable Blockchain Consensus Problem may not resolve a split-invalidated transaction. However, this means that there is no regular solution to this problem at all. Hence the below theorem.   

\begin{theorem}\label{thrmNoPure}
There does not exist a regular solution to the Partitionable Blockchain Consensus Problem in the pure asynchronous system.
\end{theorem}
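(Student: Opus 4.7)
The plan is to deduce the theorem directly from Lemma~\ref{lemSplitInvalidated} together with the definition of a regular algorithm, via a one-step contradiction. Concretely, I would suppose for contradiction that some regular algorithm $A$ solves the Partitionable Blockchain Consensus Problem in the pure asynchronous system. By the definition of regularity, $A$ has at least one computation in which a split-invalidated transaction, mined before the split, is eventually confirmed. But Lemma~\ref{lemSplitInvalidated} asserts that any regular algorithm solving the problem never confirms a split-invalidated transaction in any computation. These two statements are directly inconsistent, so no such $A$ can exist.

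Because all of the technical indistinguishability reasoning is already absorbed into Lemma~\ref{lemSplitInvalidated}, the proof of the theorem itself is essentially just a matching of quantifiers: regularity is an existential property (``there exists a computation in which\dots confirmed'') while the lemma is a universal prohibition (``no computation confirms\dots''). I would make this quantifier clash explicit in the write-up to avoid any confusion between the two notions of ``regular'' and ``strictly regular'' introduced just before the lemma; in particular, the weaker existential form of regularity is already enough to trigger the contradiction, so the strictly regular case follows a fortiori.

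The hard part is not in this theorem at all but in Lemma~\ref{lemSplitInvalidated}, which carries the FLP-style indistinguishability argument: a pre-split computation in which the split-invalidated transaction is confirmed can be rearranged, via swaps of concurrent actions (justified by the causality discussion in Section~\ref{section:notation} and the fact that no inter-partition messages follow the split), into a computation in which the split is scheduled earlier, so peers cannot tell whether they are in the pre-split or post-split world at the moment of confirmation. Once the split is pushed back far enough, the confirmed transaction is split-invalidated with respect to the seed actually used, and confirming it violates confirmation validity. Since my proof of the theorem only invokes the lemma as a black box, I expect it to be short; the only drafting care needed is to state the contradiction cleanly and to point out that the result rules out even the weak existential form of regularity, so a fortiori no strictly regular solution exists either.
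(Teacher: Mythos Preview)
Your one-line reduction to Lemma~\ref{lemSplitInvalidated} is logically sound and matches the connecting sentence the paper places between the lemma and the theorem. However, be aware of a structural mismatch: in the paper, Lemma~\ref{lemSplitInvalidated} is stated \emph{without} its own proof, and the full indistinguishability argument is placed in the proof environment attached to Theorem~\ref{thrmNoPure}. So if you submit only the quantifier-clash paragraph as the theorem's proof, you are leaning on a lemma that, in the paper's organization, has no independent justification; the paper effectively proves the lemma \emph{inside} the theorem's proof and then observes the contradiction at the end. Your decomposition is cleaner, but you must actually supply the lemma's proof somewhere.

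Your sketch of that underlying argument is also a bit off. You describe pushing the split earlier until confirmation violates \emph{confirmation validity}. The paper's argument is more delicate and hinges primarily on \emph{branch compatibility}: it iteratively moves the split backward across inter-partition sends until it isolates a single transmission $a(tr)$ whose presence or absence flips the resolution of $t$; from that it builds a hybrid computation $c_{uv}$ in which one partition confirms $t$ and the other rejects it. Only in the residual Case~2 (where no such pivotal transmission exists) does the argument push the split past the mining action itself and construct $c_{zw}$, again yielding disagreement between partitions. Confirmation validity enters only indirectly, to force $c_w$ to reject $t$. If you intend to prove Lemma~\ref{lemSplitInvalidated} yourself, make sure your write-up captures this two-case structure and the branch-compatibility violation, not just a validity violation.
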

\noindent

\begin{proof}
Assume there is a regular algorithm \emph{ALG} that solves the Partitionable Blockchain Consensus Problem. Since \emph{ALG} is regular, there is a computation $c_x$  of \emph{ALG} that contains a split-invalidated confirmed transaction $t$ that is mined before the split.  Let $a(t)$ and $a(s)$ be actions of $c_x$ such that: $a(t)$ is the action that mines $t$, $a(s)$ is the action that splits the network into two partitions $p_1$ and $p_2$.

Let us examine $c_x \equiv \langle \mathit{pfx}, a(t), seg, a(s), \cdots \rangle$, where $\mathit{pfx}$ is a prefix preceding the mining of transaction $a(t)$, $seg$ --- a segment of the computation separating $a(t)$ and the split action $a(s)$.
Let us further examine the actions of $seg$. We focus on the actions sending messages between peers in $p_1$ and $p_2$. Consider
$seg \equiv \langle seg_1, a(tr), seg_2 \rangle $, where $seg_1$ is a segment that  contains arbitrary actions, $a(tr)$ is the last action in $seg$ that transmits a message between peers in partitions $p_1$ and $p_2$. That is, segment $seg_2$ does not contain any message sent from a peer in one of the future partitions to a peer in the other, i.e. it contains intra-partition communication only.

Consider another computation of $c_y$ with the following prefix $\langle \mathit{pfx}, a(t), seg_1, a(s) \cdots\rangle $. That is, $c_y$ shares the prefix with $c_x$ up to the last sending of a message between the two partitions. The split happens right before the possible message transmission. Computation $c_y$ may either reject $t$ or confirm it. If $c_y$ confirms $t$ just like $c_x$, we continue the process of moving the split action $a(s)$ and shortening the segment $seg$. We stop when we find a computation that rejects $t$ or when we exhaust the segment. Hence, there could be two possible cases.

\emph{Case 1}. There exist two computations $c_u$ and $c_v$ such that $c_u$ rejects $t$ while $c_v$ confirms it and the composition of the two computations is
as follows:
\begin{align*}
&c_u \equiv \langle \mathit{pfx}, a(t), seg_3, a(s), \mathit{sfx_u}\rangle\ \  \textrm{and} \\
&c_v \equiv \langle \mathit{pfx}, a(t), seg_3, a(tr), seg_4, a(s), \mathit{sfx_v} \rangle
\end{align*}
where $seg_3$ is a segment that contains arbitrary actions, $a(tr)$ is an action sending messages between partitions $p_1$ and $p_2$, and $seg_4$ is a segment with only intra-partition communication. That is, the two computations differ in the way they resolve $t$ and share a prefix up to the split action except for a single transmission between partitions and intra-partition communication $a(tr)$. To put another way, this single transmission determines whether transaction $t$ is accepted or rejected.

Assume, without loss of generality, that $a(tr)$ is an action of a peer in $p_1$ that sends messages from partition $p_1$ to partition $p_2$.  Now, the receipt of this transmission effectively determines whether peers in $p_2$ accept or reject $t$. We exploit this dependency in the construction of the following computation $c_{uv}$. 

In this construction, $seg|_p$ denotes the actions from segment $seg$ by peers in partition $p$. The two rows indicate the actions of peers in partitions $p_1$ and $p_2$ respectively. 
\begin{equation*}
c_{uv} \equiv \bigg\langle \mathit{pfx}, a(t), seg_3, a(s), 
\begin{array}{l}
a(tr), seg_4|_{p1}, \mathit{sfx_v}|_{p1} \\
\mathit{sfx_u}|_{p2}
\end{array}
\bigg\rangle 
\end{equation*}

That is, the prefix of $c_{uv}$ is $\langle \mathit{pfx}, a(t), seg_3, a(s) \rangle$. 
In the partition $p_1$, the actions are as in $c_v$ following the split except the split is moved ahead of $a(tr)$ and actions of $seg_4$.  Let us discuss this move. Action $a(tr)$ contains a message transmission from a peer in $p_1$ to peers $p_2$. Otherwise, $a(tr), seg_4|p_1$ contain only intra-partition communication and are thus concurrent with $a(s)$. These actions can be swapped with the split without affecting causality.
To put another way, the peers in $p_1$, may not determine whether the message in $a(tr)$ is received by peers in $p_2$. For the partition $p_2$, we add to $c_{uv}$ the actions as in $c_u$ following the split. We interleave the actions of the two partitions in $c_{uv}$ in an arbitrary but fair manner.

Let us examine $c_{uv}$. By construction, it is a computation of \emph{ALG}. However, in partition $p_1$, the actions of peers are as in $c_v$. This means that they confirm $t$. But, in partition $p_2$, the peers behave as in $c_u$. This means that they reject $t$. This however, means that the two partitions do not agree on the resolution of a transaction that is mined pre-split. That is, this computation violates the Branch Compatibility Property of the Partitionable Blockchain Consensus Problem. Let us now consider the second case.

\emph{Case 2.}
In every computation where the split happens after mining $t$, $t$ is confirmed regardless of communication between partitions. 

Let us then consider the computation $c_z \equiv \langle 
\mathit{pfx}, a(t), a(s), \mathit{sfx_z} \rangle$. That is, in this computation the split happens right after $a(t)$ is mined. Such a computation confirms $t$ as well. Assume, without loss of generality, that $t$ is mined in partition $p_1$.

Let us now examine a computation $c_w \equiv \langle \mathit{pfx}, a(s), a(t),\mathit{sfx_w}\rangle$. This computation shares a prefix with $c_z$ but the split occurs before the $t$ is mined. In this case, the $t$ is invalid. Therefore, \emph{ALG} has to reject $t$ in $c_w$.

We construct a computation $c_{zw}$ as follows. 
\begin{equation*}
c_{zw} \equiv \bigg\langle \mathit{pfx}, a(s), 
\begin{array}{l}
a(t), \mathit{sfx}_z|_{p1} \\
\mathit{sfx}_w|_{p2}
\end{array} \bigg\rangle 
\end{equation*}

The prefix of this computation is $\mathit{pfx}$. Then, in partition $p_1$, this prefix is followed by actions of $c_z$ in partition $p_1$ except the order of $a(t)$ and $a(s)$ is reversed. That is, the mining of the transaction follows the split. Since the split affects the communication between partitions, this changing of the order of the two transactions does not affect the actions in $p_1$. Since the split precedes the mining of the transaction, it is not received by the peers $p_2$. For the partition $p_2$, we add to $c_{zw}$ the actions of $c_w$ following the split. 
Note that $a(t)$ is mined in $p_1$, therefore, it only goes into the actions of partition one. 
The actions of the two partitions are interleaved in an arbitrary but fair manner. 

Let us consider the constructed computation $c_{zw}$. Since the actions in $p_1$ are as in $c_z$, the peers in this partition confirm $t$. However, the actions in $p_2$ are as in $c_w$. This means the peers in this partition reject $t$. That is, in $c_{zw}$, the peers in the two partitions disagree on the resolution of the pre-split transaction. That is, this computation violates the Branch Compatibility Property.

To summarize, any regular algorithm \emph{ALG} that confirms a split-invalidated transaction also has computations that violate the properties of the Partitionable Blockchain Consensus Problem. Hence, contrary to our initial assumption, a solution to the Partitionable Blockchain Consensus Problem may not confirm a split-invalidated transaction.
\qed\end{proof}

\section{Partitioning Detectors}
\label{section:detectors}

The partitionable blockchain consensus problem is impossible without detectors. The lack of solution is due to the impossibility of ascertaining whether the message reached all peers. Let us discuss detectors that may circumvent this and enable a solution. A propagation detector \emph{PROP} addresses this concern directly: for each peer and for each block, \emph{PROP} outputs whether this block is delivered to peers of the entire network or just for a single partition. 

Let us give a more precise specification for \emph{PROP}. Assume \emph{PROP} is running concurrently with a blockchain consensus algorithm \emph{ALG}. In \emph{PROP}, each peer contains an input variable $in$ and an output variable $out$. Initially, both variables contain $\bot$. Algorithm \emph{ALG} can write to \emph{in} and read from \emph{out}. Let $b$ be a block mined by \emph{ALG} in some computation. 
If there is a suffix of the computation of \emph{PROP} where $in=b$, then the computation of \emph{PROP} consists of a prefix where $out=\bot$ followed by a suffix where either (i) $out=\textbf{true}$ if the block in \emph{ALG} was received by all peers in the network or (ii) a suffix where $out=\textbf{false}$ if the block was delivered only to a partition. That is, \emph{PROP} correctly classifies block receipts, does not make mistakes or changes its decision. This definition is extended to an arbitrary number of blocks in a straightforward manner: there is an input and output variable for each block in each peer. 

Another detector \emph{AGE} outputs whether the block was mined before or after the split. The detector classifies the pre-split block as \emph{old}, and post-split block as \emph{new}. The formal definition is similar to that of \emph{PROP} above. 

Since the block is broadcast right after mining and message transmission is reliable, the only way for a message not to be delivered to all peers is if there is a split in the network. Hence the following lemma.

\begin{lemma} 
The propagation detector is equivalent to the age detector. That is: $PROP \equiv AGE$.
\end{lemma}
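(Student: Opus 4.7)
The plan is to establish the equivalence by exhibiting two mutual reductions, using the communication model's guarantee that pre-split broadcasts reach everyone while post-split broadcasts are confined to the sender's partition.

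First I would show that \emph{AGE} can be simulated from \emph{PROP}: build a trivial wrapper algorithm that, whenever the client algorithm writes a block $b$ into \emph{AGE}'s input variable, writes $b$ into \emph{PROP}'s input variable and then translates \emph{PROP}'s output by setting \emph{AGE}'s output to \emph{old} on \textbf{true} and to \emph{new} on \textbf{false}. Correctness follows because a pre-split block is broadcast under reliable delivery and therefore reaches every peer (so \emph{PROP} returns \textbf{true}), while a post-split block, by the model, only reaches peers in the sender's partition (so \emph{PROP} returns \textbf{false}). The $\bot$ prefix of \emph{PROP}'s output translates into the $\bot$ prefix of \emph{AGE}'s output, and once \emph{PROP} stabilizes its classification so does the simulated \emph{AGE}.

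Next I would show the reverse simulation, \emph{AGE} to \emph{PROP}, by the symmetric wrapper: forward the input block to \emph{AGE} and output \textbf{true} when \emph{AGE} says \emph{old}, \textbf{false} when it says \emph{new}. The correctness argument is the contrapositive of the previous one: if a block is pre-split it is guaranteed to be delivered to all peers, so labeling it \textbf{true} matches \emph{PROP}'s specification; if the block is post-split then by assumption it was broadcast after the split and thus reached only its partition, so labeling it \textbf{false} is correct.

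The main step, and the only one requiring some care, is justifying the biconditional "delivered to all peers $\iff$ mined before split" that makes both translations sound. This is where I would invoke (i) the channel-reliability assumption inside a partition and across the whole network pre-split, (ii) the stipulation that a block is broadcast immediately upon being mined, and (iii) the model's rule that once the split occurs no cross-partition message is ever delivered. Together these pin down that the age of a block and the scope of its propagation coincide, so the two detectors carry the same information and the two wrappers above witness $\mathit{PROP} \equiv \mathit{AGE}$.
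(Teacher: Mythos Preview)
Your proposal is correct and follows essentially the same reasoning as the paper, which offers only the one-sentence justification preceding the lemma (immediate broadcast plus reliable pre-split delivery forces full propagation to coincide with pre-split mining). You simply spell out the two wrapper reductions and the underlying biconditional explicitly, whereas the paper leaves them implicit.
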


Detector eventual \emph{AGE}, denoted $\diamond AGE$, is similar to \emph{AGE}. Like \emph{AGE}, detector $\diamond AGE$ outputs whether the block was mined before or after the split. However, $\diamond AGE$ is not reliable: $\diamond AGE$ may make a finite number of mistakes. Specifically, given a block mined by algorithm \emph{ALG} running concurrently with $\diamond AGE$, if there is a suffix of computation of $\diamond AGE$ such that where $in = b$, then the computation of $\diamond AGE$ contains a suffix where $out=new$ if the block is mined before split or $out = old$ if the block is mined post-split. 

To put another way, for each peer and for each block, $\diamond AGE$ is guaranteed to eventually output the correct result. To distinguish $\diamond AGE$, we call \emph{AGE} the perfect age detector. 

\begin{lemma}\label{lem:ageWeaker}
The $\diamond AGE$ detector is strictly weaker than $AGE$. That is: $\diamond AGE \prec AGE$.
\end{lemma}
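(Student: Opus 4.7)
The plan is to prove the two directions of strict weakness separately.

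For $\diamond AGE \preceq AGE$: the reduction is immediate. Consider the trivial algorithm that reads $AGE$'s output variable and writes it verbatim to $\diamond AGE$'s output variable. Since $AGE$ is perfect, on every block $b$ its output stabilizes on the correct value after a finite prefix of $\bot$. This trivially satisfies $\diamond AGE$'s specification, which only requires eventual correctness and tolerates finitely many mistakes (zero is finite). So every computation of $AGE$ can be translated into a valid computation of $\diamond AGE$.

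For the strict direction, I would show $AGE \not\preceq \diamond AGE$ by contradiction. Suppose there is an algorithm $R$ that takes $\diamond AGE$'s output stream and produces a valid $AGE$ output stream. Fix a block $b$ mined in some computation. Construct two computations $c_1$ and $c_2$ of $\diamond AGE$ with $in=b$ whose correct answers differ: in $c_1$ the true value is $new$; in $c_2$ the true value is $old$. The key trick is that $\diamond AGE$ is permitted a finite prefix of wrong answers. So I would arrange that in both $c_1$ and $c_2$, $\diamond AGE$ outputs $new$ up to some common stage $k$, and only afterwards stabilizes: $c_1$ continues outputting $new$ forever, while $c_2$ eventually switches to $old$.

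Because $R$ must satisfy $AGE$'s specification, it is required to eventually write a non-$\bot$ value to its output and then never change it. Pick the stage $k$ to be strictly later than the (necessarily finite) stage at which $R$ first writes a non-$\bot$ value in $c_1$. Since $R$'s inputs in $c_1$ and $c_2$ are identical up through stage $k$, $R$ has written the same non-$\bot$ value in both computations by stage $k$. That value is $new$ in one of them and committing to $new$ is correct for $c_1$; but in $c_2$ the true answer is $old$, so $R$ outputs an incorrect value and can never retract it. This violates $AGE$'s perfectness.

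The main obstacle is making the indistinguishability argument watertight: one has to verify that the constructed $c_2$ is indeed a legal computation of $\diamond AGE$ (its mistakes are confined to a finite prefix) and that $R$'s actions are truly determined by $\diamond AGE$'s output stream up to the chosen stage, so that $R$ cannot distinguish $c_1$ from $c_2$ before it is forced to commit. Once that indistinguishability is formalized, combining it with the fact that $AGE$'s specification forbids any later correction yields the contradiction, establishing $\diamond AGE \prec AGE$.
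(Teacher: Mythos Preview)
Your proposal is correct and follows essentially the same approach as the paper's proof: an indistinguishability argument where the reduction $R$ is forced to commit to a value based on a finite prefix of $\diamond AGE$'s output, after which a second computation with the same prefix but opposite true answer exposes $R$'s mistake. The only cosmetic differences are that the paper swaps the roles of \emph{old} and \emph{new} in $c_1$ and $c_2$, and that you explicitly spell out the trivial direction $\diamond AGE \preceq AGE$, which the paper leaves implicit.
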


\begin{proof}
To prove strict weakness of $\diamond AGE$, we need to show that $AGE$ is not weaker than $\diamond AGE$. That is, there there does not exist an algorithm that takes any computation of $\diamond AGE$ and produces a computation of \emph{AGE}.

Assume the opposite. Let \emph{ALG} be such an algorithm. Let computation $c_1$ of $\diamond AGE$ decide the age of a single block $b$. The block is old.
As perfect \emph{AGE} does not make mistakes, for each peer it has to contain a prefix where $out=\bot$ followed by a suffix where $out = \text{old}$. Let $s_1$ be the state of $c_1$ where \emph{ALG} outputs the decision of \emph{AGE} for every peer of the network. This decision is based on the output of $\diamond AGE$.

We compose the computation $c_2$ as follows. It contains the same block $b$ and the same output of $\diamond AGE$ up to and including $s_1$. However, in this computation $b$ is new. That is, $\diamond AGE$ makes a mistake which it corrects later in the computation of $c_2$
However, since $c_1$ and $c_2$ share prefixes, \emph{ALG} outputs that $b$ is old. That is, \emph{ALG} makes a mistake. Perfect \emph{AGE} may not make mistakes. 
Therefore, \emph{ALG} may not be an implementation of \emph{AGE}. This means that our initial assumption is incorrect and the lemma follows.
\qed\end{proof}




Detector \emph{WAGE}, pronounced "weak-age", has output similar to \emph{AGE} and $\diamond AGE$. However, unlike these two detectors, \emph{WAGE} may make infinitely many mistakes subject to the following constraints. For each block for at least one peer per partition, the suffix of the computation of \emph{WAGE} contains only correct output; for all other peers, every suffix contains infinitely many states with correct output. 
To put another way, \emph{WAGE} ensures that at least one peer per partition eventually starts classifying blocks correctly and all other peers at least alternate their classifications without permanently settling on incorrect output.  However, $\diamond AGE$ may be implemented using only \emph{WAGE}. Figure~\ref{figAgeEquivalency} shows this implementation. Below lemma formalizes this statement.


\begin{lemma}\label{lemAgeWage}
 Eventual age detector is equivalent to weak age detector. That is: $ \diamond AGE \equiv \mathit{WAGE}$.
\end{lemma}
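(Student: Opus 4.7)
The plan is to exhibit two algorithms: one that simulates \emph{WAGE} from $\diamond AGE$, and one that simulates $\diamond AGE$ from $\mathit{WAGE}$, yielding equivalence. The first is immediate: forward each peer's $\diamond AGE$ output directly into its $\mathit{WAGE}$ output variable. Since $\diamond AGE$ guarantees that every peer is eventually always correct, both $\mathit{WAGE}$ clauses---at least one stable peer per partition, and every peer infinitely often correct in every suffix---hold trivially.

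For the reverse simulation, producing a $\diamond AGE$ computation from a $\mathit{WAGE}$ one, I would run the following distributed algorithm alongside $\mathit{WAGE}$. Each peer $p$ maintains, for every block $b$, a local counter $c_p(b)$ that is incremented whenever its own $\mathit{WAGE}$ reading for $b$ flips, and periodically broadcasts $(b,\text{current reading}, c_p(b))$ within its partition. Each peer stores the most recent triple received from every partition neighbour, and outputs the simulated $\diamond AGE$ value for $b$ as the reading of the neighbour whose stored counter for $b$ is smallest, breaking ties by peer identifier. The intuition is that the stable peer $p^*$ guaranteed by the first $\mathit{WAGE}$ clause has a bounded change counter, while every peer that fails to stabilize flips infinitely often and therefore has an unbounded counter.

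To verify correctness, fix a block $b$ and a peer $p$, and let $v^*$ be the correct classification of $b$. By the first $\mathit{WAGE}$ clause, some peer $p^*$ in $p$'s partition has its $\mathit{WAGE}$ output on $b$ eventually fixed at $v^*$; let $C^*$ be the final value of $c_{p^*}(b)$. By fair action execution $p^*$ continues to broadcast $(b,v^*,C^*)$ infinitely often after stabilization, so by fair receipt $p$ records this triple. For any other peer $q$ in the partition, either $q$ also stabilizes---in which case $\mathit{WAGE}$'s second clause forces the stable value to be $v^*$ and $q$ is an equally acceptable selection---or $q$ flips infinitely often, in which case $c_q(b)\to\infty$ and $p$ eventually records counters from $q$ that exceed $C^*$. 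From that point onward the minimum-counter selection at $p$ always falls on a stable peer, so $p$'s output is permanently $v^*$, matching the $\diamond AGE$ specification.

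The main obstacle is justifying the dichotomy used above: every peer is either stable on $v^*$ or flips infinitely often. I would derive it directly from the two $\mathit{WAGE}$ clauses: the second clause requires every non-witness peer to have infinitely many correct outputs in every suffix, which rules out stability on an incorrect value, and combined with the first clause this leaves only ``stable on $v^*$'' and ``oscillating'' behaviours. Routine asynchronous-model arguments about fair scheduling, reliable intra-partition FIFO channels, and the finiteness of each partition then convert these counter-growth properties into the required eventual stabilization at every peer.
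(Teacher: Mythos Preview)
Your proposal is correct and follows essentially the same approach as the paper: the trivial direction is handled identically, and for the nontrivial direction both you and the paper have each peer maintain a flip counter for its own \emph{WAGE} output, broadcast it, and output the reading associated with the minimum observed counter. Your write-up is in fact slightly more careful than the paper's, since you explicitly invoke the second \emph{WAGE} clause to rule out a peer stabilizing on the wrong value; the only cosmetic difference is that you broadcast periodically while the paper broadcasts only on change, but under reliable intra-partition delivery both variants work.
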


\begin{proof}
All computations of $\diamond AGE$ are already computations of \emph{WAGE}. To prove the equivalency, we need to show that $\diamond AGE$ can be implemented using $\mathit{WAGE}$. We discuss the implementation of the detector for a single block. For multiple blocks, the detector implementation runs concurrently. 

The implementation algorithm is shown in Figure~\ref{figAgeEquivalency}. It operates as follows. Each peer $p$, maintains the last known output of $\mathit{WAGE}$ for all peers. It is stored in array $ages$ indexed by peer identifier. Similarly, $p$ keeps track of the number of changes in the output of $\mathit{WAGE}$ for each peer. This is stored in array $\mathit{flips}$. If the output of $\mathit{WAGE}$ changes for its peer, it updates $ages[p]$, increments $\mathit{flips}[p]$ and broadcasts the update.
For implemented output of $\diamond AGE$, each peer outputs the value of $\mathit{WAGE}$ with the minimum number of flips. 

Let us discuss why this implementation is correct. Let $p$ be a peer that makes finitely many mistakes in a computation. In this computation, the value of $\mathit{flips}[p]$ is finite in all peers. If $p$ makes infinitely many mistakes, $\mathit{flips}[p]$ grows without a bound in all peers. By the specification of $\mathit{WAGE}$, for each block $b$, there is at least one process per partition that makes finitely many mistakes. Our implementation selects the output of implemented $\diamond AGE(b)$ to be the one with the smallest number of flips. This way, in any computation, eventually, the output of $\diamond AGE(b)$ is correct.
 \qed\end{proof}

Detector \emph{SPLIT} outputs whether a split in the system has occurred. Observe that the message delivery is unreliable only in case of a network split. 
Therefore, the occurrence of the split can be determined if \emph{PROP} indicates that a certain block has not propagated to the whole system.
The converse is not in general true. Just the fact of a split does not allow peers to determine whether the particular block has reached every peer. Hence the following lemma.

\begin{lemma}
The split detector is strictly weaker than the propagation detector. That is: $SPLIT \prec PROP$.
\end{lemma}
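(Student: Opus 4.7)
The plan is to establish the two halves of strict weakness separately: (i) $PROP$ can implement $SPLIT$, and (ii) $SPLIT$ cannot implement $PROP$.

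For (i), I would build $SPLIT$ on top of $PROP$ as follows. Each peer keeps a boolean $split$, initially \textbf{false}. Whenever a block $b$ is mined locally or received, the peer writes $b$ into $PROP$'s input and, once $PROP$'s output for $b$ settles to \textbf{false}, sets $split := \textbf{true}$ permanently. If no split occurs, channels are reliable and every mined block reaches every peer, so $PROP$ never returns \textbf{false} and $split$ correctly stays \textbf{false}. If a split does occur, then by the standing assumption that each partition receives infinitely many permanently valid transactions (combined with the mining assumption), infinitely many blocks are mined post-split inside each partition; any such block fails to reach the other side, so $PROP$ eventually outputs \textbf{false} for some $b$ and $split$ becomes \textbf{true}, again correct. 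This shows $SPLIT$ is weaker than $PROP$.

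For (ii), mimicking the indistinguishability argument of Lemma~\ref{lem:ageWeaker}, I would assume for contradiction an algorithm $ALG$ that, from any computation of $SPLIT$, produces a computation of $PROP$, and then exhibit two computations that force $ALG$ to make contradictory choices. Fix a peer $p$. In $c_1$, a block $b$ is mined pre-split inside $p$'s eventual partition, broadcast, and received by every peer, after which a split occurs; hence $PROP(b) = \textbf{true}$. In $c_2$, the split happens first and $b$ is mined inside $p$'s partition strictly after the split, so it is delivered to $p$ but not to the other partition; hence $PROP(b) = \textbf{false}$. In both computations, $SPLIT$ eventually transitions to \textbf{true}, and by exploiting asynchrony (analogous to the construction of $c_{zw}$ in Theorem~\ref{thrmNoPure}), the local transcript at $p$ — its message receipts, the mining or arrival of $b$, and the moment $SPLIT$ flips — can be arranged identically in $c_1$ and $c_2$. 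Thus $ALG$ must output the same value of $PROP(b)$ in both, contradicting $PROP(b) = \textbf{true}$ in $c_1$ and $PROP(b) = \textbf{false}$ in $c_2$.

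The main obstacle is justifying the indistinguishability step in (ii). One must argue that $SPLIT$'s output is just a per-peer boolean that carries no information about individual blocks, and that swapping the causal order of the split action and of the mining/broadcast of $b$ is permissible because, after reordering, these events are concurrent from $p$'s viewpoint — the same manoeuvre used in Theorem~\ref{thrmNoPure}. Once the two local views at $p$ coincide, the contradiction is immediate from the definition of a detector implementation, completing the proof that $SPLIT \prec PROP$.
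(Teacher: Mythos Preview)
Your proposal is correct and follows the same intuition as the paper. Note that the paper does not give a formal proof of this lemma at all; it offers only the informal paragraph preceding the statement (``the occurrence of the split can be determined if \emph{PROP} indicates that a certain block has not propagated\ldots The converse is not in general true. Just the fact of a split does not allow peers to determine whether the particular block has reached every peer''). Your part (i) is exactly the paper's first sentence made concrete, and your part (ii) turns the paper's second sentence into an explicit indistinguishability argument patterned on Lemma~\ref{lem:ageWeaker} and the $c_{zw}$ construction of Theorem~\ref{thrmNoPure}. The only point worth tightening is the one you already flag: in building $c_1$ and $c_2$ you should make explicit that the split in $c_1$ occurs immediately after $p$'s broadcast of $b$ (so no cross-partition acknowledgements are sent pre-split and hence none need be delivered to $p$), and that \emph{SPLIT}'s transition to \textbf{true} at $p$ is delayed in $c_2$ until after $b$ is mined, which asynchrony permits; with those choices the local views at $p$ coincide and the contradiction goes through.
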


This theorem summarizes the above lemmas.

\begin{theorem}\label{thrmDetectors}
The relationship between partitioning detectors is as follows:
\[
PROP \equiv AGE \succ \diamond AGE \equiv \mathit{WAGE},\ PROP \succ SPLIT.
\]
\end{theorem}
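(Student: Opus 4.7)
The plan is to assemble the theorem directly from the four individually proved statements: (i) $PROP \equiv AGE$ from the equivalence lemma between the propagation and age detectors; (ii) $\diamond AGE \prec AGE$ from Lemma~\ref{lem:ageWeaker}; (iii) $\diamond AGE \equiv WAGE$ from Lemma~\ref{lemAgeWage}; and (iv) $SPLIT \prec PROP$ from the split-detector lemma. What remains is just to verify that these four atomic facts compose into the two chains asserted by the theorem, which amounts to two small compositionality arguments about the ``weaker than'' relation.

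First, I would observe that ``weaker than'' is transitive. If algorithm $M_1$ converts every computation of $X$ into a computation of $Y$, and $M_2$ converts every computation of $Y$ into a computation of $Z$, then running $M_2$ on top of $M_1$'s output gives an algorithm converting $X$-computations into $Z$-computations, using the fair interleaving guaranteed by the detector framework of Section~\ref{section:notation}. Consequently $\equiv$ is an equivalence relation. Second, I would argue that strict weakness respects equivalence on either side: if $B \succ A$ and $A \equiv A'$, then $B \succ A'$, because $A'$ is weaker than $B$ by transitivity, while if $B$ were weaker than $A'$ then $B$ would be weaker than $A$, contradicting $B \succ A$. The symmetric claim for equivalence on the left-hand side is identical.

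Given these two facts, the theorem follows mechanically. For the main chain, $PROP \equiv AGE$ by (i), then $AGE \succ \diamond AGE$ by (ii), and $\diamond AGE \equiv WAGE$ by (iii); compositionality then delivers $PROP \equiv AGE \succ \diamond AGE \equiv WAGE$ as a single chain. The second relation $PROP \succ SPLIT$ is just (iv) restated. I do not expect a real obstacle here, since every nontrivial step is already done in the preceding lemmas; the only thing to be careful about is that the composed implementations continue to interleave fairly with the algorithm that consumes their output, which is immediate from the definition of detector composition given earlier.
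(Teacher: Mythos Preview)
Your proposal is correct and takes essentially the same approach as the paper: the paper presents Theorem~\ref{thrmDetectors} with the sentence ``This theorem summarizes the above lemmas'' and gives no further proof, so assembling it from the four preceding lemmas is exactly what is intended. Your added checks on transitivity of ``weaker than'' and that strict weakness is stable under equivalence are more explicit than anything the paper spells out, but they are straightforward and consistent with the paper's definitions.
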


\section{Algorithm \emph{PART}}
\label{section:part}

\begin{figure*}[t]
    \centering
    \includegraphics[width=14cm]{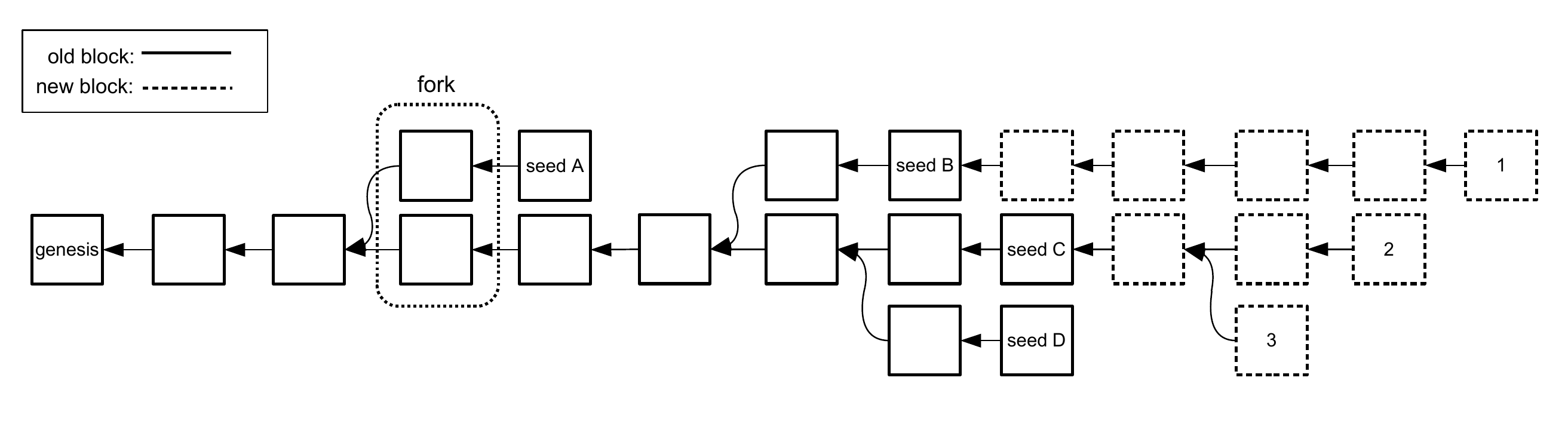}
    \vspace{-6mm}
    \caption{Global blockchain tree generated by \emph{PART}.}
    \label{figPARTImage}
\end{figure*}

In this section we present an algorithm, we call \emph{PART}, that solves the Partitionable Blockchain Consensus Problem. This algorithm is shown in Figure~\ref{figPART}.

\ \\
\textbf{Algorithm description.} The gist of the algorithm is as follows. The peers construct the blockchain tree and read the output of \emph{AGE} to classify which portion of this tree is old and which is new. This way the peers agree on the same old block to be the seed used for splitting the accounts. The peers then proceed to mine new blocks according to the split accounts in the separate partitions on the basis of this seed. 

Let us describe the algorithm in detail. Each peer maintains its copy of the blockchain $bc$,
and a priority queue $\mathit{txs}$ of all received transactions. The transactions are arranged in the order of their identifiers in $\mathit{txs}$. If a new transaction is received, it is entered into $\mathit{txs}$. We assume that $\mathit{txs}$ is never empty. See Figure~\ref{figPARTImage} for an example of a blockchain. 

\emph{PART} keeps track of the most recent output of the \emph{AGE} detector in the $currentAge$ variable. When $currentAge$ is $new$, the block is mined with split accounts. The value of $currentAge$ is recorded in the mined block.
Once the block is mined, \emph{PART} checks the output of \emph{AGE} against
the new block and sets $currentAge$ accordingly.

Function $mainChain()$ of \emph{PART} operates as follows. It consults \emph{AGE} for all blocks in $bc$ and constructs $trueTree$ with only those blocks whose recorded age matches the output of \emph{AGE}. If \emph{AGE} is perfect, the $trueTree$ contains all of $bc$. Function $mainChain()$ then builds $oldTree$ that contains only old blocks that are connected to the genesis block. Function $mainChain()$ then finds the longest branch $oldBrach$ in $oldTree$.  Ties are broken deterministically. Then, $mainChain()$ examines the new branches of $trueTree$ connected to the tail of $oldBranch$ and selects the longest branch which it stores in $newBranch$. Function $mainChain()$ returns the concatenation of $oldBranch$ and $newBrach$,
To summarize, $mainChain()$ operates on the subtree whose age agrees with the output of \emph{AGE} and returns the longest old branch connected to the longest new branch. We refer to the output of $mainChain()$ as \emph{main chain}.

Let us discuss the operation of $mainChain()$ using the example in Figure \ref{figPARTImage}. The $oldTree$ there includes all blocks in the branches that run from the genesis to seeds $A$, $B$, $C$ and $D$. Branches that run to $C$ or $D$ are of equal length and longer than the others.  Assume the tie is broken in favor of $C$. Two new branches are attached to $C$. The branch that runs to block $2$ is selected since it is the longer one. Function $mainChain()$ returns the branch that runs from the genesis to block $2$. Note that even though the branch that runs to $1$ is longer overall, old blocks are considered first. Thus, branch to $1$ has a shorter old blocks branch. 

Function $nextValid()$ returns the fist valid transaction in $txs$ that is not in the main chain. Function $resetMining()$ checks whether the peer is currently mining the next valid transaction, and if not, it restarts mining.

\emph{PART} operates as follows. Each peer continuously attempts to mine the first valid transaction in $\mathit{txs}$ that is not in the main chain of $bc$.
If mined, the recorded age of the block is compared to the output of the \emph{AGE} detector. If they are the same, the newly mined block is added to $bc$ and broadcast. If not, new age is recorded in $currentAge$, and the block is discarded. Then, the mining of the next  transaction starts. 

If the peer receives a block $nb$ mined by another peer, it checks if this block is linked to any of the blocks in $bc$. If not then this block is added to $unlinked$, which is a set of such blocks. If $nb$ is linked to $bc$, the peer inserts this block into $bc$ then checks if any of blocks in $unlinked$ may now also be linked to $bc$.

\ \\
\textbf{Correctness proof.}
\begin{lemma}\label{lemGrows}
The main chain of every peer increases indefinitely.
\end{lemma}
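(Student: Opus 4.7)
The plan is to proceed in two steps: first establish that blocks are added to each peer's local tree $bc$ infinitely often, then lift this growth to the main chain.

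First I would invoke the two global assumptions together. The problem statement assumes at least one client per partition submits infinitely many permanently valid transactions, and the model assumes fair broadcast, so every peer eventually receives infinitely many fresh transactions into $\mathit{txs}$. The mining assumption (``if a peer receives infinitely many new transactions, then the peer either receives infinitely many mined blocks or mines infinitely many blocks itself'') then forces, at each peer, an infinite supply of block-level events. Moreover, since received blocks must originate from some peer's mining action and the partition is of finite size, a pigeonhole argument guarantees at least one peer $p^{*}$ in the partition mines infinitely often.

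Next I would argue that $p^{*}$'s main chain grows without bound. When $p^{*}$ finishes mining a block $nb$, the block was mined on top of $tail(mainChain())$ as computed by $resetMining()$, and its recorded age equals $currentAge$. Because the peer is using the perfect $AGE$ detector, $nb$ is never discarded on the age-mismatch test, so $nb$ is inserted into $bc$. Immediately after the insertion, the branch through $nb$ is one longer than the previous main chain and is age-consistent, so the next invocation of $mainChain()$ returns a chain strictly longer than before. Iterating over the infinitely many mined blocks shows $p^{*}$'s main chain length diverges.

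Finally I would lift the conclusion to an arbitrary peer $p$ in the same partition. Each block that $p^{*}$ mines is broadcast, and by fair message receipt is eventually delivered to $p$. The $unlinked$ buffer mechanism ensures that once all ancestors of a received block have arrived, it is spliced into $p$'s $bc$; since $p^{*}$ broadcasts every block on its growing main chain, the entire main chain of $p^{*}$ eventually appears as a branch in $p$'s tree. Because perfect $AGE$ agrees on this branch, the $trueTree$ at $p$ contains it in full, and by the definition of $mainChain()$ the branch returned by $p$ is at least as long as $p^{*}$'s main chain. Hence $p$'s main chain length also diverges.

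The main obstacle I anticipate is the interaction between discarded blocks and monotone growth: in principle a mined block can be dropped if its recorded age disagrees with the detector's current output, and concurrent arrivals could shift $mainChain()$ away from the branch under mining. Under perfect $AGE$, however, the recorded age and the detector value coincide permanently for each block, so no successful mining is wasted; and any shift caused by receiving another peer's block only replaces the current main chain by an even longer one, which is consistent with the growth claim. These observations remove the only subtle sources of non-progress and close the proof.
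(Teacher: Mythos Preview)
Your route---single out a prolific miner $p^{*}$, show its main chain diverges, then propagate the conclusion to every other peer through broadcast---is structurally different from the paper's argument, which tracks each peer's \emph{position} on the global tree and analyses how that position changes under block receipts versus self-mining. The decisive observation in the paper is one you explicitly get wrong: $mainChain()$ does \emph{not} maximise total branch length. It first maximises the length of the old-block prefix and only then the new-block suffix. Hence receiving an old block can make the main chain jump to a branch with a longer old prefix but a \emph{shorter} total length. The paper states this outright (``the peer may potentially move to a shorter new branch because it has a longer prefix'') and neutralises it by noting that only finitely many old blocks are ever mined, so such shrinking moves occur only finitely often.

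This gap bites your argument in two places. In your second step, the sentence ``any shift caused by receiving another peer's block only replaces the current main chain by an even longer one'' is false; between two consecutive mining events of $p^{*}$ an incoming old block can shorten $p^{*}$'s main chain, so your chain of ``strictly longer after each mine'' inequalities does not telescope into divergence. In your third step the same phenomenon breaks the lift: from ``$p$'s tree contains $p^{*}$'s main chain as a branch'' you cannot conclude ``$p$'s $mainChain()$ is at least that long,'' because $p$ may instead select a branch with a longer old prefix and smaller total length. (A minor side issue: even under perfect $AGE$ the first block a peer mines after the split has recorded age $old$ but detector value $new$ and is discarded, so ``no successful mining is wasted'' is not literally true---though this costs only one block.) To repair your approach you need precisely the paper's ingredient: finiteness of the old-block set, hence eventual agreement on the longest old branch; once that is in place your pigeonhole-and-propagation scheme does go through.
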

Intuitively, the peer either mines the blocks itself, or receives infinitely many mined blocks. Hence, the main chain keeps increasing. 

\begin{proof}
Since the communication is reliable, rather than consider separate trees maintained by each peer, we can consider a global tree of all blocks mined so far. The peers can be represented as located on this global main chain according to the mined blocks they have received. 

Let us consider the position of each peer on this global blockchain tree according to the peer's main chain. The peer may change its position by receiving a mined block or by mining a block itself. Let us discuss the position change due to receiving a block. If the peer receives a block linked to its current branch, it moves up the branch and extends its main chain. If the peer receives a block linked to a different branch, it may move to the new branch. Let us examine this move. Let $cb$ be the current branch and $nb$ be the new branch. 

Each branch is a concatenation of two chains: a prefix of old blocks and a suffix of new blocks. 
The move happens if the prefix of $nb$ is longer than the prefix of $cb$; or the prefixes are the same and the suffix of $nb$ is longer than the suffix of $cb$. 

If the computation contains no split, then every block that is mined is an old block. In this case, the new suffixes do not exist and the peer moves only if the prefix of $nb$ is longer. 

If the computation contains a split, the peer may potentially move to a shorter new branch because it has a longer prefix. 
However, if there is a split, the number of old blocks is finite. Once every peer receives all the old blocks, such moves are no longer possible. That is, the number of times the peer switches to a shorter branch is finite. After these switches, the peer may switch branches only if it has a longer suffix. That is, if this new branch is longer. Thus, if the peer switches position due to block receipt infinitely many times, its main chain grows indefinitely.

Let us consider a peer $p$, that changes position due to block receipt only finitely many times. This means that $p$ mines infinitely many transactions itself. In which case its main chain grows indefinitely as well. 
\qed\end{proof}

\begin{figure*}
\begin{tabular}{l@{\hskip 1in}c@{\hskip 0.5in}c}

\begin{minipage}[t]{0.5\textwidth}
\scriptsize
\label{part}
\begin{tabbing}
123\=123\=123\=12345\=12345\=12345\=12345\=12345\=12345\=12345\=\kill
$\textbf{variables}$\\
\>  $bc$ \ \ \ // tree of mined blocks, rooted in genesis \\
\>  $unlinked$ \ \ \ // set of received blocks with missing intermediate \\ 
\> \> \> // links \\
\>  $\mathit{txs}$ \ \ \ // queue of received transactions, prioritized by id \\
\> $currentAge$ \ // true if accounts are split after partitioning \\\
\ \\
$\textbf{functions}$ \\
\>$mainChain()$ \\
\>\> $trueTree$ := blocks in $bc$ whose age match $AGE$ output \\
\>\> $oldTree$ := branches with only old blocks of \\
\>\>\>\>\> $trueTree$ rooted in genesis\\ 
\>\> $oldBranch$ := longest branch in $oldTree$ \\ 
\>\> $newTree$ := branches with only new blocks of\\
\>\>\>\>\> $trueTree$ rooted in $tail(oldBranch)$\\
\>\> $newBranch$ := longest branch in newTree \\ 
\>\> $\textbf{return}\ oldBranch + newBranch$ \\ 
\ \\
\>$nextValid()$ // returns the first valid \\
\>\>\>\>\ \ \  // transaction in $\mathit{txs}$ that is not in $main(bc)$\\
\ \\
\> $resetMining()$ \\
\>\> \textbf{if} not mining $nextValid()$ \\
\>\>\> startMining $nextValid()$ with $currentAge$\\
\end{tabbing}
\end{minipage}
&
\begin{minipage}[t]{0.5\textwidth}
\scriptsize
\begin{tabbing}
123\=123\=123\=12345\=12345\=12345\=12345\=12345\=12345\=12345\=\kill
$\textbf{commands}$ \\

\textbf{receive} transaction $t$ $\longrightarrow$ \\
\> $enqueue(txs, t)$\\
\> $resetMining()$ \\
\ \\ 

\textbf{mine} block $nb$ $\longrightarrow$ \\ 
\> $currentAge$ := $AGE(nb)$ \\
\> $insert(nb,bc)$\\
\> $broadcast(nb)$ \\
\> $resetMining()$ \\

\ \\
\textbf{receive} mined block $nb$ $\longrightarrow$ \\ 
\>   \textbf{if} possible to $insert$($nb$,$bc$) \ \ \ // add new block to blockchain\\
\>\>    $insert(nb,bc)$ \\
\>\> \textbf{while} exists $b$ in $unlinked$ that can be inserted into $bc$ \\
\>\>\>  $insert(b, bc)$ \\
\> \textbf{else} \\
\>\> add $nb$ to $unlinked$ \\
\> $resetMining()$ 
\end{tabbing}
\end{minipage}
\end{tabular}
\captionof{figure}{Algorithm \emph{PART}.}\label{figPART}

\begin{tabular}{c c}
\begin{minipage}[b]{0.5\textwidth}
\scriptsize
\begin{tabbing}
123\=123\=123\=12345\=12345\=12345\=12345\=12345\=12345\=12345\=\kill
$\textbf{constants} $ \\
\> $p$ // identifier of this peer\\
\> $b$ // block whose age is evaluated \\
\ \\
 $\textbf{variables}$\\
\>$\mathit{flips}$ // array of changes in output from each peer, initially zero \\
\>$ages$ //array of most recent outputs of \emph{WAGE}, initially old\\
\ \\
\textbf{commands}\\
$ages[p]\ \textbf{not}=\ WAGE(b) \longrightarrow$ \\
\>$ages[p] = WAGE(b)$ \\
\>increment $\mathit{flips}[p]$ \\
\>$broadcast(\mathit{flips}[p], ages[p])$\\
\ \\
\textbf{receive} $numFlips, age$ \textbf{from} $id$ $\longrightarrow$ \\
\> $\mathit{flips}[id] := numFlips $ \\
\> $ages[id] := age $ \\
\ \\
$\diamond AGE(b) \longrightarrow$ // implemented output of $\diamond AGE$ \\ 
\> $\textbf{output}\ ages[id]$ for the $id$ with minimum $\mathit{flips}[id]$ \\
\end{tabbing}
\caption{Implementation of $\diamond AGE$ using \textit{WAGE}.}\label{figAgeEquivalency}
\end{minipage}
&
\begin{minipage}[t]{0.49\textwidth}
\includegraphics[width=\textwidth]{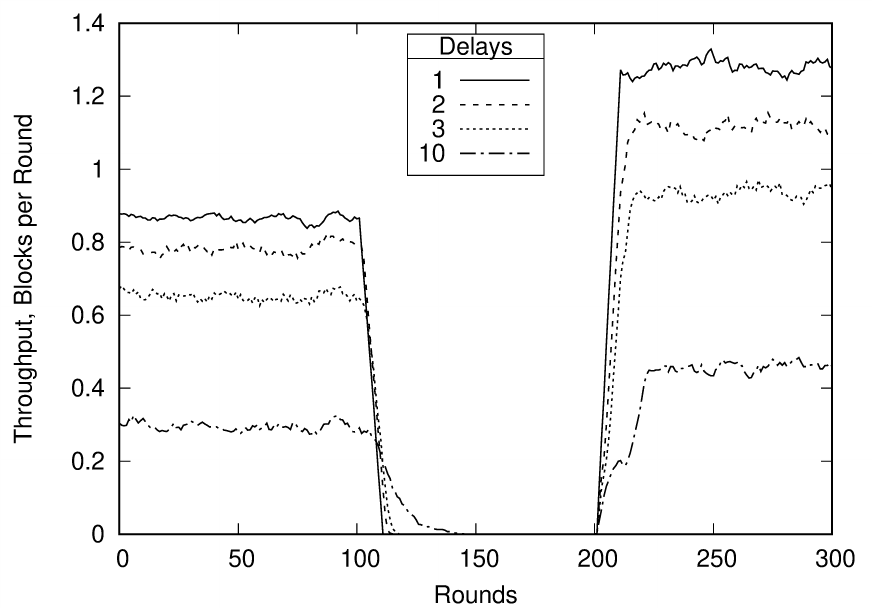}
\caption{\emph{PART}+$\diamond$\emph{AGE}. Split at round 100, detector recognizes it at round 200.}
\label{fig:LaggingWage}
\end{minipage}
\\
\begin{minipage}[t]{0.49\textwidth}
\includegraphics[width=\textwidth]{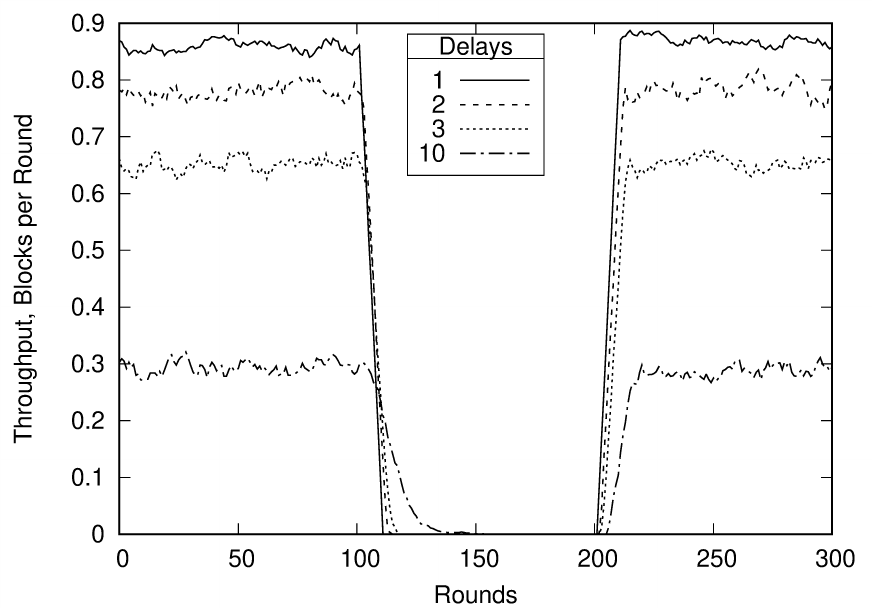}
\caption{\emph{PART}+$\diamond$\emph{AGE}. No split in computation, detector mistakenly recognizes it at round 100, corrects at round 200.}
\label{fig:LyingWage}
\end{minipage}
&
\begin{minipage}[t]{0.49\textwidth}
\includegraphics[width=\textwidth]{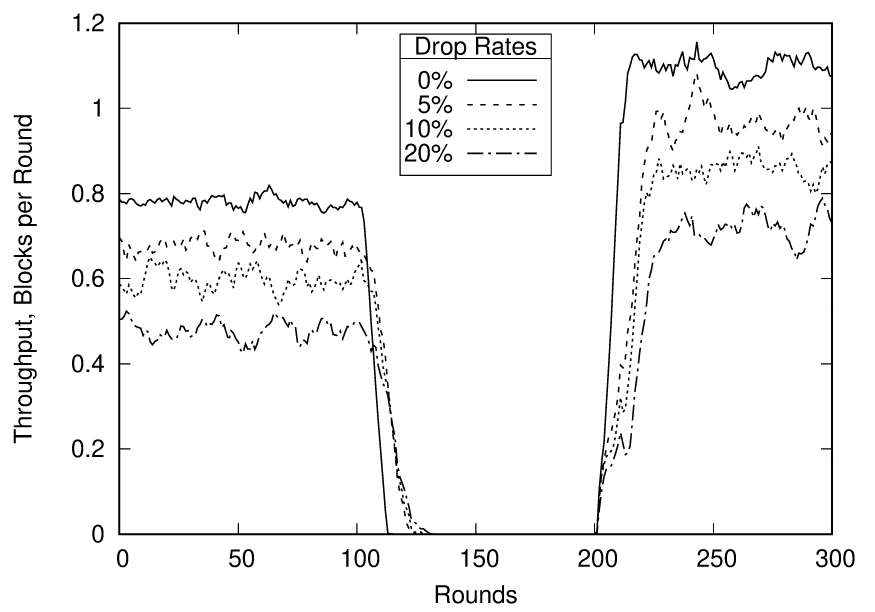}
\caption{\emph{PART} + $\diamond$\emph{AGE} with message loss. Split at round 100, detector recognizes it at round 200. Message delay is 2 rounds.}
\label{fig:MessageLoss}
\end{minipage}
\end{tabular}
\end{figure*}

\begin{lemma}\label{lemProgress}
 $PART$ satisfies the progress property of the Partitionable Blockchain Consensus Problem.
\end{lemma}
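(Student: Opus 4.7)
The plan is to combine Lemma~\ref{lemGrows} with the deterministic priority-based selection inside $nextValid()$ and the mining-fairness assumption (infinitely many blocks mined or received whenever infinitely many transactions arrive). Fix a permanently valid transaction $t$ received in some partition by peer $p$. Because \emph{AGE} is perfect, every peer classifies every block correctly, so the $\mathit{trueTree}$ used by $mainChain()$ equals the locally known $bc$, and the main-chain selection (longest old prefix concatenated with longest new suffix, deterministic tie-breaking) is stable: peers in a partition with the same $bc$ produce the same main chain. By Lemma~\ref{lemGrows}, this main chain grows without bound, and by the at-most-one-permanent-branch assumption it is, in the limit, the permanent branch of the partition.

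Suppose toward a contradiction that $t$ is never confirmed. Then $t$ remains in $\mathit{txs}$ at $p$ forever and, by permanent validity, remains valid relative to $p$'s main chain. Since $nextValid()$ returns the smallest-id valid transaction not already in the main chain, the only way $t$ is never selected is for $\mathit{txs}$ to always contain some smaller-id valid candidate. I would argue this cannot persist: each client's identifiers are strictly monotonically increasing, so any single client can contribute only finitely many identifiers below $t$'s; and every block that is mined and placed on the main chain retires its transaction from the $nextValid()$ pool. Thus the set of smaller-id unresolved candidates can only be replenished finitely often, and by the mining-fairness assumption (infinitely many blocks are produced or received at $p$) this set is eventually drained. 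From some point on, $t$ is the value of $nextValid()$ at $p$ in every invocation.

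From that point, $resetMining()$ commits $p$ to mining $t$. By the mining-fairness assumption $p$ eventually either mines a block $b$ containing $t$ or receives one from another peer. Since $t$ is permanently valid and its recorded age matches \emph{AGE}, $b$ is inserted into $bc$ and lies on the main chain. Lemma~\ref{lemGrows} then guarantees the main chain continues to extend past $b$, so $b$ lies on the permanent branch and $t$ is confirmed, contradicting the assumption. Hence every permanently valid transaction is eventually confirmed.

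The main obstacle is the second paragraph: one must carefully rule out the starvation scenario in which an unbounded stream of lower-id valid transactions perpetually outranks $t$ in $\mathit{txs}$. This requires the per-client monotonicity of identifiers together with the fact that blocks appended to the main chain permanently retire their transactions from $nextValid()$'s consideration; without both facts, $t$ could in principle be bypassed forever.
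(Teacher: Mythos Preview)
Your argument has a real gap in the third paragraph. You fix an arbitrary peer $p$ that receives $t$ and then claim that the mining-fairness assumption yields ``$p$ eventually either mines a block $b$ containing $t$ or receives one from another peer.'' The assumption in the paper is only that a peer with infinitely many incoming transactions either mines infinitely many blocks \emph{or} receives infinitely many mined blocks. Nothing forces the second disjunct to ever deliver a block containing $t$: the other peers may have different current main chains and hence different $nextValid()$ values, and you never argue anything about them. So if $p$ happens to be a peer that never succeeds at mining, your local argument stalls. There is a second, independent problem: even when $p$ does mine $t$, you assert that the resulting block ``lies on the main chain'' and then invoke Lemma~\ref{lemGrows} to conclude it lies on the permanent branch. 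Lemma~\ref{lemGrows} only says the main chain grows without bound; it does not say that any particular freshly mined block survives rather than being orphaned by a later branch switch.

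The paper closes both holes with a single pigeonhole step you are missing. Since the permanent branch is infinite and there are only finitely many peers, some peer $q$ mines infinitely many blocks \emph{that end up on the permanent branch}. Now run your finiteness argument at $q$: only finitely many transactions have identifiers below $t$, and $q$ always mines the smallest-id valid transaction not on its main chain, so among $q$'s infinitely many permanent-branch blocks at most finitely many can precede $t$ in priority. Hence $t$ must appear on the permanent branch. The point is that by choosing $q$ via the permanent branch rather than fixing an arbitrary receiver, you simultaneously guarantee that $q$ actually mines infinitely often and that the blocks you count are not orphaned.
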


\begin{proof}
 Let $t$ be a permanently valid transaction. Let us consider the suffix of the computation where each peer $p$ receives $t$. Once $p$ receives $t$, it enters $t$ into $\mathit{txs}$. 
Each client submits transactions in the increasing order of their identifiers. That is, in any computation, there is only finitely many transactions with identifiers smaller than $t$. 

By Lemma~\ref{lemGrows}, the main chain of every peer increases indefinitely. Every block in this main chain is mined by some peer. Hence, there must be at least one peer $p$ that mines infinitely many blocks in this main chain. 
By the design of the algorithm, each peer mines the valid transaction with the smallest identifier that is not in its main chain. Therefore, $p$ may only mine finitely many blocks before $t$. Thus, an infinite main chain must contain $t$.
\qed\end{proof}

\begin{lemma}\label{lemmaPrefix}
In $PART$, the permanent branches of all peers have a common prefix up to and including the seed block.
\end{lemma}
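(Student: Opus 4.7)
The plan is to show that the \emph{old} portion of each peer's main chain stabilizes to the same branch, and then argue that the permanent branch at every peer inherits this same old prefix up to and including the seed.

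First, I would observe that by definition an \emph{old} block is one mined before the split. Such a block is broadcast before the split, and by the communication model every pre-split message is delivered to every peer in the network. Hence there is a finite set $O$ of old blocks, and eventually every peer's $bc$ contains all of $O$ together with the parent links carried inside the blocks. Because \emph{PART} uses the perfect detector \emph{AGE}, each block in $O$ is classified as old at every peer, so every block of $O$ is retained in each peer's $trueTree$. Consequently, once a peer has received all of $O$, the subtree of its $trueTree$ consisting of old blocks rooted at the genesis is exactly the same finite tree at every peer.

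Next I would use the construction of $mainChain()$. The function first computes $oldTree$, picks its longest branch $oldBranch$ (breaking ties deterministically), and only then extends with a new branch rooted at $tail(oldBranch)$. Since $oldTree$ is finite, identical at every peer after receiving $O$, and the tie-breaking rule is deterministic, $oldBranch$ is the same sequence of blocks at every peer, and so is its tail, which is the seed. Crucially, the choice of $oldBranch$ depends only on old blocks, not on any new blocks that might have been mined in one partition but not reached the other; this is exactly why the old prefix agrees across partitions.

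Finally I would connect this to the permanent branch. The permanent branch at a peer is the unique infinite branch inside its tree; by Lemma~\ref{lemGrows} the main chain grows without bound, and since $oldBranch$ is finite, this growth must take place on the $newBranch$ side of the concatenation $oldBranch + newBranch$. Hence the permanent branch at every peer has $oldBranch$ as its prefix, terminating in the common seed $tail(oldBranch)$, which is what the lemma requires.

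The main obstacle is the subtle dependence of $mainChain()$ on arriving blocks: one has to be careful that late-arriving \emph{new} blocks from the peer's own partition cannot retroactively change the $oldBranch$ selection. This is handled by the structure of $mainChain()$, which computes $oldBranch$ purely from old blocks before looking at any new blocks; I would emphasize this point when writing the formal proof.
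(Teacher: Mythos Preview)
Your proposal is correct and follows essentially the same approach as the paper: reliable pre-split delivery ensures every peer eventually holds the same finite set of old blocks, and because $mainChain()$ selects the longest old branch deterministically before considering any new blocks, every peer computes the same $oldBranch$ and hence the same seed. Your version is more detailed than the paper's---you make explicit the role of perfect \emph{AGE}, the connection to the permanent branch via Lemma~\ref{lemGrows}, and the observation that late-arriving new blocks cannot alter $oldBranch$---but the underlying argument is the same.
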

Intuitively, due to reliable message transmission, blocks mined before the split will be received by all peers. 

\begin{proof}
Since the message propagation is reliable, blocks mined before split will be sent and received by all peers of the network. $PART$ computes the main chain to be the branch that is the longest distance from the genesis to the seed. The ties are deterministically broken. Therefore, once all peers receive all pre-split messages, their main chains include the same branch and the same seed. 
\qed\end{proof}


\begin{lemma}\label{lemBranch}
Algorithm $PART$ satisfies the branch compatibility property of the Partitionable Blockchain Consensus Problem. 
\end{lemma}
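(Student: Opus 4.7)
The plan is to reduce Lemma~\ref{lemBranch} to Lemma~\ref{lemmaPrefix} together with Proposition~\ref{propMerge}. First I would dispose of the trivial case: if no split occurs in the computation, there is only one partition and hence only one permanent branch, so branch compatibility is vacuous. The interesting case is when a split produces two partitions $p_1$ and $p_2$.

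In the split case, I would first confirm that each partition has a well-defined permanent branch. By Lemma~\ref{lemGrows} every peer's main chain grows without bound, and because intra-partition communication remains reliable after the split, every peer in a partition receives every block mined in that partition. Combined with the deterministic tie-breaking in $mainChain()$, this forces the main chains of peers in the same partition to converge onto a single infinite branch of the global blockchain tree, giving one permanent branch per partition.

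Next, I would apply Lemma~\ref{lemmaPrefix} to the two permanent branches. Since that lemma quantifies over \emph{all} peers---including those in different partitions---the permanent branches of $p_1$ and $p_2$ share a common prefix up to and including the same seed block. Proposition~\ref{propMerge} then states that branches split on the same seed are mergeable, which is precisely the branch compatibility property of the Partitionable Blockchain Consensus Problem.

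The main obstacle is making airtight the claim that both partitions actually anchor their post-split suffixes at a common seed, since the two $mainChain()$ computations are performed locally and independently. This rests on three facts that I would make explicit when invoking Lemma~\ref{lemmaPrefix}: \emph{AGE} reliably classifies every pre-split block as old, so all such blocks survive the filter into $trueTree$; every pre-split block is delivered to every peer by the reliable pre-split transmission assumption; and $mainChain()$ breaks ties among equal-length old branches deterministically. With these in hand, both partitions construct the same $oldTree$, select the same $oldBranch$, and hence identify the same seed at its tail, so Proposition~\ref{propMerge} applies and the lemma follows.
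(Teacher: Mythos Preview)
Your approach is correct and matches the paper's own proof, which simply invokes Lemma~\ref{lemmaPrefix} to obtain a common seed and then Proposition~\ref{propMerge} to conclude mergeability. You add more detail---the no-split case, the existence of a unique permanent branch per partition, and the three ingredients behind Lemma~\ref{lemmaPrefix}---but the skeleton is identical.
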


\begin{proof}
According to Lemma~\ref{lemmaPrefix}, the main chains of all peers include the same seed block. Due to Proposition~\ref{propMerge}, branches from different partitions are mergeable. Hence the lemma.
\qed\end{proof}

\begin{theorem}\label{thrmPART}
Algorithm $PART$ solves the Partitionable Blockchain Consensus Problem.
\end{theorem}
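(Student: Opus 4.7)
The plan is to verify each of the three properties of the Partitionable Blockchain Consensus Problem in turn, assembling the theorem from the lemmas already proven. Branch compatibility is immediate from Lemma~\ref{lemBranch}, and the progress property is immediate from Lemma~\ref{lemProgress}. The only remaining obligation is confirmation validity: no invalid transaction appears on any permanent branch.

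To discharge confirmation validity, I would proceed as follows. By the notation section, a peer attempts to mine only transactions that are valid relative to its current main chain, and $nextValid()$ is defined to return exactly such a transaction. Each mined block records the value of $AGE$ at the time of mining, so the balance regime under which validity was checked is captured in the block itself. Because \emph{PART} uses the perfect detector $AGE$, the recorded age of every block equals its true age, hence $trueTree$ contains every legitimately mined block, and $mainChain()$ always returns a branch whose blocks are labelled correctly. This ensures that old blocks on the main chain were mined against the unsplit balance and new blocks were mined against the post-split balance, which matches the validity rule that applies at each block's position in the chain.

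To lift per-block validity at mining time to validity along the entire permanent branch, I would invoke Lemma~\ref{lemmaPrefix}: every peer shares the same pre-split prefix and seed, so when a post-split block is mined, the split balances its miner used are consistent with those implied by the common seed on the eventual permanent branch. Combined with the fact that the main chain output of $mainChain()$ is built by concatenating the longest old branch with the longest new branch rooted at the tail of that old branch, this guarantees that the permanent branch is a valid chain of valid transactions. Therefore no invalid transaction is confirmed, yielding confirmation validity, and the three properties together prove Theorem~\ref{thrmPART}.

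The main obstacle will be the last step: cleanly arguing that per-block validity at mining time is preserved after a miner's local view is superseded by a longer branch. The key observation to handle this is that Lemma~\ref{lemmaPrefix} forces every competing post-split branch to share the same seed, so the split balances against which any mining peer evaluated validity are exactly the balances that apply on the permanent branch; no branch switch can retroactively invalidate a mined block's validity check. I would make this explicit and short, then conclude by stitching the three properties together.
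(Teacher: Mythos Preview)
Your proposal follows the same decomposition as the paper: Lemma~\ref{lemBranch} for branch compatibility, Lemma~\ref{lemProgress} for progress, and a direct argument for confirmation validity. The paper's own proof of confirmation validity is a bare assertion (``\emph{PART} never confirms an invalid transaction''); your fuller justification is reasonable, though the appeal to Lemma~\ref{lemmaPrefix} is unnecessary and slightly misdirected---that lemma concerns agreement on the seed \emph{across} partitions, whereas the simpler observation you need is that each block is hard-linked to a unique parent, so the path from the genesis to any block on the permanent branch is the unique tree path the miner used when checking validity (and, with perfect \emph{AGE}, the last old block on that path is exactly the seed the miner used for split balances), hence no branch switch can retroactively invalidate it.
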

\begin{proof}
\emph{PART} never confirms an invalid transaction. Thus, \emph{PART} satisfies confirmation validity. Branch compatibility satisfaction is shown in Lemma~\ref{lemBranch}. Progress satisfaction by \emph{PART} is shown in Lemma~\ref{lemProgress}.
\qed\end{proof}

Observe that the presented algorithm operates correctly even if the detector makes finitely many mistakes. That is, if the detector is $\diamond AGE$. We call this combination $PART+\diamond AGE$. Indeed, once $\diamond AGE$ converges 
to the correct output for all blocks and each peer receives all pre-split blocks, all peers agree on the seed. From this point, \emph{PART} operates as with perfect \emph{AGE}. 

Per Theorem~\ref{thrmDetectors}, $\diamond AGE$ may be implemented using a weaker detector \emph{WAGE}. Let $PART+\mathit{WAGE}$ be the combination of \emph{PART} and such an implementation. Such a combination also solves the partitioning problem. Hence the following theorem.

\begin{theorem}\label{thrmPART+WAGE}
Algorithm $PART+\diamond AGE$ and $PART+\mathit{WAGE}$ solve the Partitioning Blockchain Consensus Problem.
\end{theorem}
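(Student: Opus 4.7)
The plan is to reduce this theorem to Theorem~\ref{thrmPART} (which already establishes correctness of \emph{PART} with perfect \emph{AGE}). I would handle the two claims separately but with a common reduction strategy: first argue that $PART+\diamond AGE$ is correct because $\diamond AGE$ eventually behaves like \emph{AGE}, then bootstrap to $PART+\mathit{WAGE}$ using Lemma~\ref{lemAgeWage}.

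For $PART+\diamond AGE$, confirmation validity is immediate: \emph{PART} only mines transactions that are valid relative to the current main chain, regardless of what the detector outputs, so invalid transactions never enter the blockchain at all. For branch compatibility and progress, the key observation is that for each block, $\diamond AGE$ makes only finitely many mistakes, so in any computation there is a suffix in which the detector output is correct for every block currently in the system. Combined with the fact that pre-split messages are reliably delivered (so every peer eventually holds every pre-split block), there is a state after which, on every peer, the $\mathit{trueTree}$ built inside $mainChain()$ equals the $\mathit{trueTree}$ that would be built under perfect \emph{AGE}. From that state onward, the arguments of Lemmas~\ref{lemGrows}, \ref{lemmaPrefix}, \ref{lemProgress}, and \ref{lemBranch} transfer verbatim: main chains grow indefinitely, every peer's main chain contains the same seed, any permanently valid transaction is eventually mined into an infinite branch, and the resulting permanent branches are mergeable by Proposition~\ref{propMerge}.

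For $PART+\mathit{WAGE}$, I would appeal to Lemma~\ref{lemAgeWage}: the construction in Figure~\ref{figAgeEquivalency} implements $\diamond AGE$ using only the output of \emph{WAGE}. Composing this implementation with \emph{PART} yields an algorithm whose behavior is, by definition of equivalence, a computation of $PART+\diamond AGE$. The previous paragraph therefore applies and gives all three properties.

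The main obstacle I expect is cleanly handling the transient phase during which $\diamond AGE$ reports wrong ages. During that phase, a peer may mine a block whose recorded $currentAge$ later disagrees with the detector's corrected output, so this block is filtered out of $\mathit{trueTree}$ and excluded from every peer's main chain. I need to argue that (i) such ghost blocks cause no safety violation since they never appear in any main chain after stabilization, hence never in a permanent branch, and (ii) they cause no liveness loss: the transactions they carry remain in $\mathit{txs}$ and, because $nextValid()$ always selects the smallest-identifier valid transaction not already in the main chain, will be re-mined by the peer that produces the infinite main chain guaranteed by Lemma~\ref{lemGrows}. Once that point is made precise, the reduction to Theorem~\ref{thrmPART} closes the proof.
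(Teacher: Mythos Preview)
Your proposal is correct and follows essentially the same route as the paper: the paper's justification (given in the paragraph preceding the theorem rather than in a formal proof environment) is simply that once $\diamond AGE$ converges and every peer has received all pre-split blocks, the peers agree on the seed and \emph{PART} thereafter behaves as with perfect \emph{AGE}; the \emph{WAGE} case is then dispatched by the equivalence in Theorem~\ref{thrmDetectors} (your Lemma~\ref{lemAgeWage}). Your treatment is in fact more careful than the paper's, since you explicitly separate confirmation validity from the other two properties and you address the transient-phase ``ghost block'' issue that the paper leaves implicit.
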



\section{Performance Evaluation}
\label{section:evaluation}
\noindent\textbf{Setup.} We evaluate the performance of \emph{PART} using an
abstract simulation. We study the behavior of our algorithm through computations that we construct. The code for our simulation is available on GitHub~\cite{github}.

The simulated network consists of $n$ peers.
An individual computation is a sequence of rounds. In every round, each peer may receive new messages from each other peer, do local computation, and send messages to other peers. 

Each peer in the network has a unique channel to every other peer. Message delivery is FIFO. In a single round, a peer may receive messages from each sender.
Message propagation may take several rounds. Each message is delayed by a number of rounds. This delay is selected 
uniformly at random. 
That values range from $1$ to maximum delay $d$. Concurrent messages from the same sender do not impede other messages. That is, multiple messages from the same sender may be received in the same round.


The transaction submission rate is constant: one transaction per round. A submitted transaction is broadcast by a randomly selected peer. Block mining is simulated. Mining time is as follows. Each peer has an oracle that tells the peer whether it mined a block. In every round, the probability of mining a block for each peer is uniformly distributed between 1 and $d\cdot n$. The network size is $100$ peers. A split separates the network into two equally sized partitions of $50$  
peers. Overall transaction rate or mining rate does not change in the event of a split.

We measure algorithm throughput: the ratio of confirmed to submitted
transactions. We measure throughput under various delays. We plot the rolling average of number of confirmed transactions over the last $10$ rounds. We run $100$ experiments per each data point. 
To eliminate startup effects, we do not plot the first $100$ rounds. 



\ \\
\noindent\textbf{Results and analysis.} The results of our experiments are shown in Figures~\ref{fig:LaggingWage}, \ref{fig:LyingWage}, \ref{fig:MessageLoss}, and~\ref{fig:MultipleSplits}.

In Figure~\ref{fig:LaggingWage}, we show the results of the experiments with \emph{PART} and $\diamond AGE$.
The split occurs in round $100$. However, $\diamond AGE$ continues to classify all blocks as old until round $200$. The detector operates correctly afterwards. Between rounds $100$ and $200$, while the detector classifies blocks incorrectly, transactions are not confirmed. Therefore, the throughput decreases. Once the detector corrects itself, the old blocks are ignored by the algorithm, new blocks are mined and the throughput recovers,
This post-split throughput of the algorithm is higher since, instead of causing forks, blocks are concurrently confirmed in the two partitions. 

Figure~\ref{fig:LyingWage} also shows the results of \emph{PART} with $\diamond AGE$. In this case, there is no split, however, between rounds $100$ and $200$, the detector classifies all blocks as new. The detector recovers and starts classifying all blocks as old after round $200$. The algorithms behavior is similar to that shown in Figure~\ref{fig:LaggingWage}. Note, that there is no actual split in this experiment. Therefore, after the detector recovers, the number of forks does not decrease. 


\section{Mulitple Splits, Message Loss, Partition Merging}


\noindent
\textbf{Multiple splits.}
Let us consider the case of multiple split actions in a single computation. That is, a network partition may further split. Each individual split event separates a partition into two. We introduce two new detectors to handle this case. 

The perfect multiple block age detector \emph{MAGE} is the modification of \emph{AGE}. \emph{MAGE} operates as follows. For each block $b$, \emph{MAGE} outputs the number of split events that happened in the partition where this block is mined. A way to think about \emph{MAGE} is to consider that it repeatedly acts as a single-split \emph{AGE} detector for each partition. 
For example, if the partition is never split, \emph{MAGE} outputs $0$. If the network splits into two partitions $A$ and $A'$, \emph{MAGE} outputs $1$ for the peers of both partitions. If $A$ splits into $B$ and $B'$, then \emph{MAGE} outputs $2$ for the peers in $B$ and $B'$ and still $1$ for the peers of $A'$. 

The eventual multiple block age detector $\diamond \mathit{MAGE}$ and weak multiple block age detector \emph{WMAGE} are defined similarly to their single-split counterparts. 

Algorithm \emph{PART} operates with \emph{MAGE}, $\diamond \mathit{MAGE}$ and \emph{WMAGE} without modifications. We summarize this in the following theorem.

\begin{theorem}
Algorithms $\mathit{PART}+\mathit{MAGE}$, $\mathit{PART}+\diamond \mathit{MAGE}$, and $\mathit{PART}+\mathit{WMAGE}$ solve the Partitionable Blockchain Consensus Problem with multiple splits.
\end{theorem}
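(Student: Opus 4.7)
The plan is to lift each of the lemmas of Section~\ref{section:part} to the multi-split setting by induction on split depth, so that the arguments used to prove Theorem~\ref{thrmPART} and Theorem~\ref{thrmPART+WAGE} apply generation-by-generation. First I would observe that \emph{MAGE} stratifies the global blockchain tree into generations indexed by split-depth: for each block $b$, \emph{MAGE}$(b)$ records how many split events separate the partition that mined $b$ from the initial connected network. The $mainChain()$ function lifts naturally: instead of splicing together a single old segment and a single new segment, it iteratively selects, for each generation $i=0,1,\ldots,k$ up to the peer's own split-depth $k$, the longest generation-$i$ branch rooted at the tail of the prefix assembled through generation $i{-}1$. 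Deterministic tie-breaking is applied at each level.

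I would then re-establish the analogue of Lemma~\ref{lemGrows}: every peer's main chain grows indefinitely. Structurally this is the single-split argument, because every strictly lower generation contributes only finitely many blocks (each partition can only inject finitely many generation-$i$ blocks before its generation-$(i{+}1)$ split) and so, once those blocks propagate, the only remaining cause of a branch switch is a longer suffix at the peer's own generation. Combined with the mining assumption, the main chain extends without bound. Next, I would generalize Lemma~\ref{lemmaPrefix} by induction on $i$: the base case $i=0$ is exactly the argument of Lemma~\ref{lemmaPrefix}, using pre-split reliability; in the inductive step, two peers inhabiting the same generation-$i$ partition already share, by the induction hypothesis, a common prefix through the generation-$(i{-}1)$ seed, and reliable intra-partition delivery during the lifetime of that generation-$(i{-}1)$ partition forces them to agree on the same longest generation-$i$ branch and hence the same generation-$i$ seed.

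Given these two lemmas, branch compatibility follows by chained application of Proposition~\ref{propMerge}: any two permanent branches whose partitions last shared an ancestor at split-depth $i$ coincide through that split's seed and are pairwise mergeable beyond it. Confirmation validity is immediate because peers only mine transactions that are valid relative to their current main chain, and account balances are correctly stratified by the $currentAge$ value recorded in each block. Progress follows exactly as in Lemma~\ref{lemProgress}: every permanently valid transaction eventually becomes the smallest unmined valid transaction on the indefinitely growing main chain of some peer. For $\diamond\mathit{MAGE}$, I would note that only finitely many misclassifications occur per block, so after a finite transient \emph{PART} operates as if the detector were perfect; the transient can only produce discarded blocks and cannot injure safety or liveness. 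For \emph{WMAGE}, I would replay the implementation of Figure~\ref{figAgeEquivalency} with the binary age replaced by the integer split-depth returned by \emph{WMAGE}, each peer broadcasting its $\mathit{flips}$ counter and its latest output, and the emulated $\diamond\mathit{MAGE}$ value being that of the peer with the smallest flip count. The \emph{WMAGE} specification guarantees at least one permanently-correct peer per partition, so this selection stabilizes on the true split-depth and reduces the case to that of $\diamond\mathit{MAGE}$.

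The main obstacle will be the inductive step of the multi-generation Lemma~\ref{lemmaPrefix}: specifically, justifying that a generation-$i$ partition may split again before every generation-$i$ block has reached every one of its peers. I would resolve this using the third clause of causal precedence in Section~\ref{section:notation}: any generation-$i$ message that fails to arrive before a subsequent split causally precedes that split from the recipient's vantage point, so within each resulting sub-partition the set of delivered generation-$i$ blocks is identical, and deterministic longest-branch tie-breaking therefore yields the same seed for all peers that continue together. A minor secondary obstacle is making precise the iterated $mainChain()$ construction so that it coincides with the original definition when $k \le 1$; I would state this as a straightforward structural lemma about the stratified tree rather than modify the pseudocode of \emph{PART}.
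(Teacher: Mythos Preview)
The paper does not actually prove this theorem: it merely remarks that \emph{PART} operates with \emph{MAGE}, $\diamond\mathit{MAGE}$ and \emph{WMAGE} ``without modifications'' and then records the theorem as a summary, with no argument supplied. Your proposal is therefore considerably more detailed than the paper's own treatment. The induction on split depth and the generation-by-generation lift of Lemmas~\ref{lemGrows}--\ref{lemBranch} are a natural way to furnish the argument the paper omits, and your handling of $\diamond\mathit{MAGE}$ and \emph{WMAGE} faithfully mirrors the single-split reductions the paper uses for Theorem~\ref{thrmPART+WAGE} and Lemma~\ref{lemAgeWage}.

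One small correction concerning your stated obstacle: the third causality clause in Section~\ref{section:notation} is about receipts of messages sent from \emph{outside} a partition, not about intra-partition messages still in flight when a further split occurs, so it is not quite the right tool here. The cleaner justification is the delivery guarantee itself: any message sent before a split is delivered to every peer of the pre-split partition. Under the natural extension to multiple splits, every generation-$i$ block is therefore eventually received by every peer that belonged to the generation-$i$ partition, regardless of later splits. That already yields the identical-block-set conclusion you need for the inductive step of the multi-generation Lemma~\ref{lemmaPrefix}, without invoking causal precedence.
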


\begin{figure}
\includegraphics[width=\columnwidth]{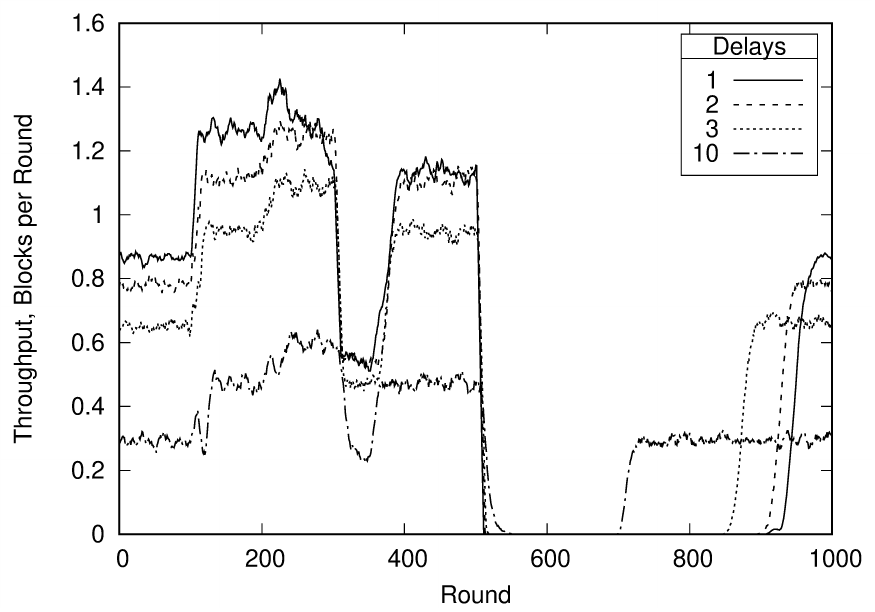}
\caption{\emph{PART} with perfect \emph{AGE} and multiple splits. No message loss. Split at round 100, second split at round 200. Merge second split at round 300, merge first split at round 500.}
\label{fig:MultipleSplits}
\end{figure}

\ \\
\noindent
\textbf{Message loss.} To tolerate message loss, the peers in algorithm \emph{PART} need to be able to recover lost messages. \emph{Disconnected subtree} is a collection of linked blocks not connected to the genesis block. Such collection has a single root. If a peer has a disconnected subtree, it is missing a block linking it to the genesis. This block may be delayed or lost.

The \emph{block catchup procedure} recovers the missing blocks. It contains two actions: (i) if the peer contains a disconnected subtree, broadcast request for the block preceding its root; (ii) if a peer receives such request and has the requested block, broadcast this block. Note that the correctness of the block catchup procedure does not depend on the timing of the request action so long as it is eventually executed. 

To be able to guarantee meaningful liveness, we restrict message loss as follows: if the same message is broadcast by the same peer infinitely many times, it is also delivered infinitely many times. If intermediate blocks are not delivered to a partition, there is no way to recover it after the split. Hence, we place another assumption. If a block $b$ is delivered to one of the peers in the partition, every block in the branch of $b$, i,e, on the chain from the genesis to $b$ is also delivered to one of the peers in this partition. With these assumptions we are able to state the following theorem.

\begin{theorem} Algorithm \emph{PART} with block catchup procedure solves the Partitionable Blockchain Consensus Problem with message loss.
\end{theorem}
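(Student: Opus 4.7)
The plan is to reduce the statement to the already-proven correctness of $\mathit{PART}$ (Theorem~\ref{thrmPART}) by showing that, under the two message-loss assumptions together with the block catchup procedure, the system eventually behaves exactly as if message delivery were reliable. Confirmation validity requires no new argument at all: peers still only mine transactions that are valid with respect to their local main chain, and message loss cannot cause an invalid transaction to sneak into a confirmed block. So the real work is in re-establishing branch compatibility and progress.

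For both of these, I would first prove a key auxiliary lemma: under the two message-loss assumptions, every block $b$ that is delivered to at least one peer in a partition $P$ is eventually delivered to every peer in $P$. The proof proceeds by induction on the depth of $b$ in the blockchain tree. The base case is the genesis block, which every peer holds by definition. For the inductive step, consider a peer $q \in P$ that does not yet hold $b$. If $q$ later receives any descendant of $b$, then by construction $q$ holds a disconnected subtree, so the block catchup procedure causes $q$ to broadcast a request for $b$'s parent; under fair action execution this request is re-issued whenever the subtree remains disconnected, so it is broadcast infinitely often, and by the first message-loss assumption it is delivered infinitely often to every peer in $P$. By the second message-loss assumption, the parent of $b$ is held by at least one peer in $P$, which by the inductive hypothesis means eventually by every peer in $P$; any such peer then broadcasts the parent on demand infinitely often, so $q$ eventually receives it, and iterating up the chain $q$ eventually reconstructs the branch to $b$.

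Once this eventual-delivery lemma is in hand, the remaining pieces fall into place almost exactly as in the reliable-message proof. The analogue of Lemma~\ref{lemmaPrefix} follows because the second message-loss assumption guarantees that every pre-split branch whose tail is delivered anywhere in a partition has its full pre-split prefix present somewhere in that partition, and the eventual-delivery lemma propagates this prefix to every peer of the partition; all peers then apply the deterministic tie-breaking rule of $mainChain()$ to the same set of old blocks and converge on the same seed, which by Proposition~\ref{propMerge} yields branch compatibility. For progress, the analogues of Lemmas~\ref{lemGrows} and~\ref{lemProgress} go through: the mining assumption still forces some peer to mine infinitely many blocks along the main chain, the eventual-delivery lemma carries those blocks to every peer of the partition, and the smallest-identifier-valid-transaction argument for $\mathit{nextValid}()$ is unchanged.

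The hard part will be the inductive step of the eventual-delivery lemma, specifically justifying that a peer with a disconnected subtree truly issues its catchup request infinitely often rather than merely once. This depends on reading action (i) of the catchup procedure as a guarded command that re-fires whenever the disconnected-subtree condition holds; I would make this reading explicit in the writeup so that the fairness assumption on action execution combined with the first message-loss assumption yields infinite delivery of the request, without which the induction collapses. A secondary subtlety is ruling out a pathological scenario in which a peer holds $b$ but no descendant of it and never receives a catchup request from $q$; the resolution is that the induction is driven by requests from $q$ downstream, not by spontaneous re-broadcast, so every missing block on a branch that reaches $P$ is eventually requested and hence eventually answered.
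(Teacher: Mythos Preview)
The paper does not actually give a proof of this theorem: it is stated immediately after the informal description of the block catchup procedure and the two message-loss assumptions, with the phrase ``With these assumptions we are able to state the following theorem.'' So your proposal is already more than the paper offers, and your overall strategy---reduce to Theorem~\ref{thrmPART} by showing that catchup plus the two assumptions restore effective reliable delivery, then rerun Lemmas~\ref{lemGrows}--\ref{lemBranch}---is exactly the argument the paper is implicitly gesturing at.

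That said, there is a gap in your auxiliary lemma as stated. You claim that \emph{every} block delivered to some peer in a partition eventually reaches every peer in that partition, and you prove it by induction on depth, with the inductive step triggered when a peer $q$ ``receives any descendant of $b$.'' But the catchup procedure is pull-based: a peer only requests the predecessor of the root of a disconnected subtree it already holds. If $b$ is a leaf delivered to a single peer $p$ and no descendant of $b$ is ever mined or delivered to $q$, then $q$ never forms a disconnected subtree below $b$, never issues a request for it, and your induction never fires. Your ``secondary subtlety'' paragraph acknowledges this scenario but does not actually dispose of it; saying the induction is ``driven by requests from $q$ downstream'' presupposes $q$ has something downstream, which is precisely what is in doubt. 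The first message-loss assumption does not help here either, since it applies only to the \emph{same} message broadcast infinitely often, and distinct descendants of $b$ are distinct messages.

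The fix is to weaken the lemma to what you actually need: not that every block propagates everywhere, but that every block on a \emph{permanent} branch does, and in particular all pre-split blocks up to and including the eventual seed. For those blocks the growth guaranteed by Lemma~\ref{lemGrows} ensures infinitely many descendants are mined; a peer on the permanent branch re-mines and re-broadcasts on top of its tail, so the same-message assumption (or the guarded re-firing of action~(i) at peers that do acquire a disconnected subtree) can be invoked to force eventual delivery of some descendant to $q$, after which your catchup induction goes through. With the lemma restated this way, the remainder of your plan---recovering Lemma~\ref{lemmaPrefix} for the seed and then Proposition~\ref{propMerge} for branch compatibility, and rerunning the progress argument---is sound and matches the paper's intent.
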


Figure~\ref{fig:MessageLoss} evaluates the operation of \emph{PART} with message loss. 

\ \\
\noindent
\textbf{Partition merging.}
There are two ways of handling temporary split. It can be considered a special case of message loss. In this case the above message loss version of \emph{PART} operates correctly. However, in this \emph{competitive merge}, the blocks mined by one of the partitions are discarded. This may be inefficient. Alternatively, we may implement  \emph{cooperative merge} that retains some of the blocks of both partitions. In this case \emph{PART} and detectors need to be modified.

Detector \emph{SMAGE} for split-merge \emph{AGE}, correctly identifies whether the block was mined when partition was split. Detectors $\diamond \mathit{SMAGE}$ and \emph{WSMAGE} are defined similarly. To implement cooperative merge, algorithm \emph{SMPART} modifies the original \emph{PART} as follows. All peers run block catchup procedure that counters message loss. By examining the output of the detector on the received blocks, each peer determines that there was a split and then a merge. Each peer then finds the leaves of the two 
longest non-conflicting branches formed during partition. The peer then mines \emph{merge block} that, rather than to a single block, links to these two blocks. This way, both branches are confirmed. If multiple such merge-blocks are mined, the block on the longest branch wins. Ties are broken deterministically.

After the merge, the merge-block acts as a seed, the accounts are combined and the computation proceeds as pre-split. The algorithm and detectors can be extended to multiple partition merge similar to multiple partition split. In case of multiple splits and merges, the accounts are combined on the basis of the account share each partition had pre-merge. Consider an example. Assume $50/50$ split. If the network splits into two $A$ and $A'$, then $A$ splits into $B$ and $B'$, and then $B'$ merges with $A'$ into $C$. Then, $B$ contains $25\%$ of the account balances while $C$ contains $75\%$.

Figure~\ref{fig:MultipleSplits} shows the results of the experiments with thus implemented algorithm \emph{SMPART} and \emph{SMAGE} detector. The detector does not make mistakes. There is no message loss. There are two consecutive splits that then merge back. 

\section{Other Extensions and Future Work}
\noindent
\textbf{Detector implementation, block purging.} The pure asynchronous system allows us to reason about the essential properties of the algorithm that do not rely on timing assumptions. Nonetheless, we would like to outline certain implementation and usage aspects of the proposed algorithm and detectors.

The age detector may be implemented with checkpointing. The idea is as follows. The peers agree on a checkpoint block on every branch of the blockchain. Once the split occurs, if a certain block precedes the checkpoint block, it is considered old. A block is new if it follows the checkpoint block. To limit the rollback overhead, the checkpoints are moved closer to the leaves of the blockchain as the computation progresses.

Checkpoints can also be utilized to save memory space used to store the blockchain branches. Since the peers never roll back 
past checkpoint blocks, rather than storing individual old blocks, it is sufficient to just store the resultant checkpoint account balances. The old blocks may then be purged from memory.

\ \\
\noindent\textbf{Fault tolerance.} Let us discuss how the proposed partitionable blockchain may withstand other faults. Peer crashes may be problematic as the blockchain has no way of determining whether the split partition is operational or it crashed. In this case, the crashed partition leads to the loss of its share of account balances. To enable crash tolerance, a crash failure detector~\cite{chandra1996unreliable,chandra1996weakest} may need to be incorporated in the design.

A robust blockchain needs to be tolerant to Byzantine faults~\cite{lamport2019byzantine} where affected peers may behave arbitrarily. Byzantine peers may compromise agreement on the seed block or on the split itself. We believe that our proposed algorithm may be made tolerant to such faults. However, a definitive study is left for future research.

\bibliographystyle{IEEEtran}
\bibliography{references}

\end{document}